\newcommand{\C}{\mathbb{C}}
\newcommand{\R}{\mathbb{R}}
\DeclareMathOperator{\sgn}{\mathrm{sgn}}
\newcommand{\dd}{\mathrm{d}}
\newcommand{\ee}{\mathrm{e}} 
\newcommand{\dom}{\mathrm{Dom}}
\newcommand{\ran}{\mathrm{Ran}}
\newcommand{\res}{\mathrm{res}}
\newcommand{\tr}{\mathrm{Tr}}
\newtheorem{theorem}{Theorem}[section]
\newtheorem{lemma}[theorem]{Lemma}
\newtheorem{corollary}[theorem]{Corollary}
\newtheorem{proposition}[theorem]{Proposition}
\theoremstyle{definition}
\newtheorem{remark}[theorem]{Remark}
\title[Non-self-adjoint relativistic point interaction]{Non-self-adjoint relativistic point interaction \\in one dimension}
\author{Luk\'{a}\v{s} Heriban}
\address{Department of Mathematics\\ Faculty of Nuclear Sciences and Physical Engineering \\ Czech Technical University in Prague\\ Trojanova 13\\ 120 00 Prague\\ Czechia}
\email{heribluk@fjfi.cvut.cz}
\author{Mat\v{e}j Tu\v{s}ek}
\address{Department of Mathematics\\ Faculty of Nuclear Sciences and Physical Engineering \\ Czech Technical University in Prague\\ Trojanova 13\\ 120 00 Prague\\ Czechia}
\email{matej.tusek@fjfi.cvut.cz}
\date{May 9, 2022}
\begin{document}

\begin{abstract}
The one-dimensional Dirac operator with a singular interaction term which is formally given by $A\otimes|\delta_0\rangle\langle\delta_0|$, where $A$ is an arbitrary $2\times 2$ matrix and $\delta_0$ stands for the Dirac distribution, is introduced as a  closed not necessarily self-adjoint operator. We study its spectral properties, find its non-relativistic limit and also address the question of regular approximations. In particular, we show that, contrary to the case of local  approximations, for non-local approximating potentials, coupling constants are not renormalized in the limit.
\end{abstract}

\maketitle

\section{Introduction}
Consider  formal differential expressions
\begin{equation} \label{eq:formal}
\mathscr{D}:=\sigma_1\otimes(-i\dd/\dd x)+\sigma_3\otimes m,\quad \mathscr{D}_A:=\mathscr{D}+A\otimes|\delta_0\rangle\langle\delta_0|,\end{equation}
where 
\begin{equation*}
\sigma_1=\begin{pmatrix}
0 & 1\\
1 & 0
\end{pmatrix},
\quad
\sigma_3=\begin{pmatrix}
1 & 0\\
0 & -1
\end{pmatrix}
\end{equation*}
are the Pauli matrices, $m\in\R$,  $A$ is an arbitrary  $2\times 2$ complex matrix, and $\delta_0$ is the Dirac distribution supported at $x=0$.  We would like to introduce $\mathscr{D}_A$ as a well defined closed operator in $L^2(\R)\otimes\C^2\equiv L^2(\R;\C^2)$. To this aim,  we   firstly extend the action of the distribution $\delta_0$ to functions which are not necessarily continuous at $x=0$.
Following \cite{Ku_96} and \cite[Sect. 3.2.4]{singular}, we put
\begin{equation*}
\langle\delta_0,\psi\rangle:=\frac{\psi(0_+)+\psi(0_-)}{2}\in\C^2
\end{equation*}
for every $\psi\equiv\psi_-\oplus\psi_+\in H^1(\R_-;\C^2)\oplus H^1(\R_+;\C^2)\subset L^2(\R_-;\C^2)\oplus L^2(\R_+;\C^2)\equiv L^2(\R;\C^2)$, where  by  $\psi(0_-)$ and $\psi(0_+)$ we mean the values of the continuous representatives of $\psi_-$ and $\psi_+$, respectively, i.e., the one-sided traces at $x=0$. For such a $\psi$, we have
$$\mathscr{D}\psi=\mathscr{D}\psi_-\oplus \mathscr{D}\psi_+ -i \sigma_1(\psi(0+)-\psi(0_-))\delta_0.$$
If we want $\mathscr{D}_A\psi$ to remain in $L^2(\R;\C^2)$, the singular contributions have to cancel out. This yields
$$-i\sigma_1 (\psi(0+)-\psi(0_-))+A~\frac{\psi(0_+)+\psi(0_-)}{2}=0,$$ 
which is convenient to rewrite as
\begin{equation} \label{eq:TC}
(2i\sigma_1-A)\psi(0_+)=(2i\sigma_1+A)\psi(0_-),
\end{equation}
which motivates us to define the following operator
\begin{align*}
&\dom(D_A)=\{\psi\equiv\psi_-\oplus\psi_+\in H^1(\R_-;\C^2)\oplus H^1(\R_+;\C^2)|\, \eqref{eq:TC}\text{ holds}\}\\
& D_A\psi=\mathscr{D}\psi_-\oplus \mathscr{D}\psi_+.
\end{align*}

Note that, for every matrix $A$, $D_A$ is an extension of the symmetric operator $D_{\min}$ that acts as $\mathscr{D}$ on 
$$\dom(D_{\min}):=\{\psi\in H^1(\R;\C^2)|\, \psi(0)=0\}.$$
It is a  well known fact that $(D_{\min})^*=D_{\max}$, where
\begin{align*}
&\dom(D_{\max})=\{\psi\equiv\psi_-\oplus\psi_+\in H^1(\R_-;\C^2)\oplus H^1(\R_+;\C^2)\}\\
& D_{\max}\psi=\mathscr{D}\psi_-\oplus \mathscr{D}\psi_+,
\end{align*}
cf. \cite{PaRi_14}. Therefore, $D_A$ is a restriction of $D_{\max}$. The deficiency indices of $D_{\min}$ are $(2,2)$ \cite{BeDa_94,PaRi_14}. Consequently, there is a four-real-parametric family of self-adjoint extensions of $D_{\min}$. We will show below that most of these self-adjoint extensions correspond to operators $D_A$ with hermitian matrices $A$ and that the remaining self-adjoint extensions cannot be described by the transmission condition \eqref{eq:TC}. If $(2i\sigma_1-A)$ is invertible then \eqref{eq:TC} yields 
\begin{equation} \label{eq:TCLambda}
\psi(0_+)=\Lambda\psi(0_-),
\end{equation}
where $\Lambda$ is a multiple of a matrix with real diagonal terms and purely imaginary off-diagonal terms such that $|\det(\Lambda)|=1$. The self-adjoint extensions described by the transmission condition \eqref{eq:TCLambda} were thoroughly studied in \cite{BeDa_94}. Two important special cases, namely the relativistic counterparts of the non-relativistic $\delta$- and $\delta'$- interactions, were investigated even earlier in \cite{GeSe_87} (see \cite{CaMaPo_13} for a generalization to infinitely many interaction points). Note that the transmission condition equivalent to \eqref{eq:TCLambda} appeared also in \cite{DiExSe_89}. Using the framework of boundary triplets it is possible to analyze all self-adjoint extensions of $D_{\min}$ in a uniform manner \cite{PaRi_14}.

The main aim of the present paper is to study $D_A$ with a general matrix $A$. Most of the new results will concern the case when $A$ is not hermitian, which is, as far as we know, the setting that has not been considered before. Firstly, in Section \ref{sec:basic}, we will check that $D_A$ is a closed operator, we will find its adjoint and show that $D_A$ is self-adjoint if and only if $A$ is hermitian, as one would guess from the formal expression \eqref{eq:formal}. Furthermore, we will provide a full description of matrices $A$ for which $D_A$ decouples into a direct sum of a pair of operators acting on the half-axes. Note that the self-adjoint realizations of the Dirac operator on a half-axis were studied before in \cite{GrLe_06}.

Section \ref{sec:spectrum} is devoted to the spectral analysis of $D_A$. Essentially, it extends  results of \cite{BeDa_94} to a not necessarily self-adjoint setting.  The spectrum of the free operator $D_0$ is purely absolutely continuous and equals $(-\infty,-|m|]\cup[|m|,+\infty)$. When  a self-adjoint point interaction is switched on, this part of the spectrum is preserved and at most two (counting multiplicities) eigenvalues may occur in the gap. In the non-self adjoint setting,  spectral effects of the point interaction may be much stronger. In fact, for certain choices of $A$ and $m$, the point spectrum of $D_A$ is the whole complex plane without $(-\infty,-|m|]\cup[|m|,+\infty)$ or either upper or lower open complex half-plane. However, except for these \textit{strongly non-self-adjoint} cases, there may be at most two (counting geometric multiplicities) discrete eigenvalues of $D_A$ in $\C\setminus((-\infty,-|m|]\cup[|m|,+\infty))$ and the essential spectrum of $D_A$ is preserved, similarly to the self-adjoint case. See Theorem \ref{theo:spec} and Corollary \ref{cor:spec} for details.  

In Section \ref{sec:approx}, we firstly study approximations of point interactions by more realistic regular potentials. The formal expression \eqref{eq:formal} for $D_A$ suggests  that $D_A^\varepsilon:=D_0+A\otimes|v_\varepsilon\rangle\langle v_\varepsilon|$, where $v_\varepsilon\to\delta_0$ as $\varepsilon\to 0$ in the distributional sense, may be a good candidate for  such a (non-local) potential. In fact, we will show that $D_A^\varepsilon$ converges to $D_A$ in the norm resolvent sense as $\varepsilon\to 0$, see Theorem \ref{theo:approx}. This result was obtained before in \cite{Se_89} for two special hermitian choices of $A$, which correspond to the electrostatic and the Lorentz scalar potential, although a complete proof was not presented there. In the same paper, the question of how to approximate relativistic point interactions (again for the same two special choices of $A$) by local regular potentials was addressed rigorously for the first time. Starting with $\mathscr{D}+A\otimes\delta_0$ instead of $\mathscr{D}_A$, where $\delta_0$ acts as a multiplication operator on functions that are not necessarily continuous at $x=0$, and so it has to be properly extended; one arrives at the same transmission condition as for $D_A$. Therefore, it is equally reasonable to take $D_0+A\otimes v_\varepsilon$ for approximating operators. Such a family converges in the norm resolvent sense as $\varepsilon\to 0$ but, surprisingly, not to $D_A$ but to the operator that describes the same type of interaction with a different coupling constant. The same effect was later observed in a general self-adjoint case \cite{Hu_97,Hu_99,Tu_20} and also in  higher dimensional settings for Dirac operators with $\delta$-shell interactions \cite{MaPi_17,MaPi_18,CaLoMaTu}. 
Since, on the other hand, the non-local approximations do not require any renormalization of the coupling constant, one is tempted to conclude that the very nature of the relativistic point interactions is non-local.
In the second part of the section, we will find an implicit equation for the eigenvalues of $D_A^\varepsilon$ and study their qualitative properties, see Propositions \ref{prop:ev} and \ref{prop:app_spec}.

Section \ref{sec:limit} deals with the non-relativistic limit of the operator $D_A^{m,c}$ that is derived from $D_A$ by adding the speed of light $c$ at right places, i.e., by changing physical units. We are then interested in $\lim_{c\to+\infty}(D_A^{m,c}-mc^2)$. One would guess that the limit operator (in a certain well defined sense) could be the non-relativistic Hamiltonian $H_A$ with $\delta$-interaction, which is described by the formal expression
$$\frac{1}{2m}\Big(-\frac{\dd^2}{\dd x^2}+\alpha|\delta_0\rangle\langle\delta_0|+i\beta|\delta_0\rangle\langle\delta'_0|-i\gamma|\delta'_0\rangle\langle\delta_0|+\delta|\delta'_0\rangle\langle\delta'_0|\Big)\equiv H_A.$$
Here $\alpha,\beta,\gamma,\delta\in\C$ are the elements of $A$,
\begin{equation} \label{eq:A}
A=\begin{pmatrix}
\alpha &\beta\\
\gamma &\delta
\end{pmatrix}.
\end{equation}
In the self-adjoint case, it was shown that some of the elements of $A$ have to be made $c$-dependent to get $1\leftrightarrow 1$ correspondence between relativistic operators $D_A$ and their non-relativistic counterparts \cite{BeDa_94}. Namely, written schematically, it holds
\begin{equation*}
"\lim_{c\to+\infty}\big(D_{A_c}^{m,c}-mc^2\big)=H_A",
\end{equation*}
where 
\begin{equation*}
A_c:=\begin{pmatrix}
\frac{1}{2mc}\alpha & \beta\\
\gamma & 2mc\delta
\end{pmatrix}.
\end{equation*}
We will prove that with the same scaling of $A$, this result may be extended to the non-self-adjoint setting, see Theorem \ref{theo:limit} for the precise statement.

Finally, some fundamental properties of $H_A$ are summarized in a separate appendix at the end of the paper. In particular, an explicit formula for the resolvent $(H_A-z)^{-1}$ is provided there.

\section{Basic properties} \label{sec:basic}
\begin{proposition} \label{prop:basic}
The operator $D_A$ is densely defined, closed, and uniquely determined by the matrix $A$ in the sense that $D_A=D_B$ if and only if $A=B$.
\end{proposition}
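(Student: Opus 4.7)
The plan is to split the statement into three essentially independent pieces. Dense definition is immediate: for every matrix $A$, the subspace $C_c^\infty(\R\setminus\{0\};\C^2)$ lies inside $\dom(D_A)$, since such test functions have vanishing one-sided traces and hence trivially satisfy \eqref{eq:TC}. As this subspace is dense in $L^2(\R;\C^2)$, so is $\dom(D_A)$.

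For closedness I would realize $D_A$ as the restriction of the closed operator $D_{\max}$ to the kernel of the linear boundary map
\[
\tau_A : \dom(D_{\max}) \to \C^2, \qquad \tau_A\psi := (2i\sigma_1 - A)\psi(0_+) - (2i\sigma_1 + A)\psi(0_-).
\]
On $\dom(D_{\max})$ the graph norm of $D_{\max}$ is equivalent to the sum of $H^1$-norms on the two half-lines (because $\mathscr{D}$ is a bounded perturbation of $-i\sigma_1\dd/\dd x$), and one-sided traces at $0$ are continuous on $H^1(\R_\pm;\C^2)$; hence $\tau_A$ is bounded in the graph-norm topology. Therefore $\dom(D_A) = \ker \tau_A$ is graph-closed, and the restriction of a closed operator to a graph-closed subspace of its domain is itself closed, so $D_A$ is closed.

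For uniqueness, I would rewrite \eqref{eq:TC} in the symmetric form
\[
2i\sigma_1\bigl(\psi(0_+) - \psi(0_-)\bigr) = A\bigl(\psi(0_+) + \psi(0_-)\bigr)
\]
and introduce the variables $x := \psi(0_+) + \psi(0_-)$ and $y := \psi(0_+) - \psi(0_-)$. Using $\sigma_1^2 = I$, the transmission condition becomes equivalent to $y = -\tfrac{i}{2}\sigma_1 A x$, i.e., the admissible boundary data $(x, y) \in \C^2 \times \C^2$ are precisely the graph of the linear operator $L_A := -\tfrac{i}{2}\sigma_1 A$. Since every $(x, y) \in \C^2 \times \C^2$ is realized as the boundary data of some $\psi \in H^1(\R_-;\C^2) \oplus H^1(\R_+;\C^2)$ (e.g. compactly supported smooth interpolants with the prescribed one-sided traces), the equality $\dom(D_A) = \dom(D_B)$ forces the graphs of $L_A$ and $L_B$ to coincide, so $\sigma_1 A = \sigma_1 B$, and invertibility of $\sigma_1$ yields $A = B$; the converse is trivial.

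The only mildly delicate point is the uniqueness step: if one argues from \eqref{eq:TC} directly, the matrix $2i\sigma_1 - A$ need not be invertible, so one cannot simply solve for $\psi(0_+)$ in terms of $\psi(0_-)$ and compare. The change of variables above converts the transmission condition into an honest graph relation and bypasses any case analysis in $A$.
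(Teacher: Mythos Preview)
Your proof is correct and follows essentially the same approach as the paper: density via $C_c^\infty(\R\setminus\{0\};\C^2)$, closedness via equivalence of the graph norm with the $H^1(\R_-)\oplus H^1(\R_+)$ norm together with continuity of the one-sided traces, and uniqueness via the observation that for every $\xi\in\C^2$ there is a $\psi\in\dom(D_A)$ with $\psi(0_+)+\psi(0_-)=\xi$. Your packaging of the last two steps---phrasing closedness as ``restriction of $D_{\max}$ to the kernel of a graph-norm-bounded boundary map'' and uniqueness as ``equality of graphs of the linear maps $L_A=-\tfrac{i}{2}\sigma_1 A$''---is slightly more abstract than the paper's direct verification, but the underlying ingredients are identical; in fact the paper's explicit choice $\psi(0_\pm)=\tfrac14(2\sigma_0\mp i\sigma_1 A)\xi$ is exactly the point $(x,y)=(\xi,L_A\xi)$ on your graph.
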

\begin{proof}
For every matrix $A$, $C_0^\infty(\R\setminus\{0\};\C^2)\subset\dom(D_A)$ and $C_0^\infty(\R\setminus\{0\};\C^2)$ is dense in $L^2(\R;\C^2)$. Hence, $D_A$ is densely defined. 

The norm $\|\psi\|_A^2:=\|\psi\|^2+\|D_A\psi\|^2$ defined for all $\psi\in\dom(D_A)\subset H^1(\R_-;\C^2)\oplus H^1(\R_+;\C^2)$ is equivalent to the norm in $H^1(\R_-;\C^2)\oplus H^1(\R_+;\C^2)$. Since the latter is complete, the limit $\psi$ of any Cauchy sequence $(\psi_n)\subset\dom(D_A)$ with respect to the norm  $\|.\|_A$ remains in $H^1(\R_-;\C^2)\oplus H^1(\R_+;\C^2)$. Moreover, $\psi$ obeys the same transmission condition \eqref{eq:TC} as all functions $\psi_n$, because the trace operator at $x=0$ is continuous mapping from $H^1(\R_\pm;\C^2)$ to $\C^2$. This shows that $\psi\in\dom(D_A)$, and so $A$ is closed.

Finally, $A=B$ clearly implies $D_A=D_B$. If $D_A=D_B$ then every $\psi\in\dom(D_A)=\dom(D_B)$ satisfies \eqref{eq:TC} for both $A$ and $B$. Subtracting these conditions we get 
\begin{equation} \label{eq:AB}
(A-B)(\psi(0_+)+\psi(0_-))=0.
\end{equation}
Take an arbitrary $\xi\in\C^2$ and construct a function $\psi\in H^1(\R_-;\C^2)\oplus H^1(\R_+;\C^2)$ such that $\psi(0_\pm)=\frac{1}{4}(2\sigma_0\mp i \sigma_1 A)\xi$.  Then $\psi\in\dom(D_A)$ and $(\psi(0_+)+\psi(0_-))=\xi$. Hence, the sum $(\psi(0_+)+\psi(0_-))$ in \eqref{eq:AB} attains all possible values. We infer that $A=B$.
\end{proof}

\begin{proposition} \label{prop:decoup}
The operator $D_A$ decouples into a direct sum of operators on $L^2(\R_-;\C^2)\oplus L^2(\R_+;\C^2)$ if and only if $A=\pm 2i\sigma_1$ or none of the matrices $(2i\sigma_1-A)$ and $(2i\sigma_1+A)$ is invertible. In the positive case, $D_A=D_A^+\oplus D_A^-$ with
\begin{equation} \label{eq:decoupOp}
\begin{split}
&\dom{D_A^\pm}=\{\psi_\pm\in H^1(\R_\pm;\C^2)|\, (2i\sigma_1\mp A)\psi_\pm(0)=0\} \\
&D_A^\pm\psi_\pm=\mathscr{D}\psi_\pm.
\end{split}
\end{equation}
\begin{proof}
If $A=2i\sigma_1$ then \eqref{eq:TC} yields $4i\sigma_1\psi(0_-)=0$ and no restriction on $\psi(0_+)$. The first condition is equivalent to $\psi(0_-)=0$. If $A=-2i\sigma_1$ then the roles of $\psi(0_-)$ and $\psi(0+)$ must be interchanged. In both cases, $D_A$ decouples in the manner that is asserted in the proposition. 
Next, we will assume that $A\neq\pm 2i\sigma_1$.

If $(2i\sigma_1-A)$ or  $(2i\sigma_1+A)$ is invertible then we rewrite \eqref{eq:TC} as $\psi(0_+)=\Lambda\psi(0_-)$ or $\psi(0_-)=\tilde\Lambda\psi(0_+)$, respectively. The range of $\Lambda$ and $\tilde\Lambda$, respectively, is at least one-dimensional. Therefore, the transmission condition \eqref{eq:TC} does not decouple.

Recalling \eqref{eq:A},
we conclude that both $(2i\sigma_1-A)$ and $(2i\sigma_1+A)$ are not invertible
if and only if
$$0=\det(2i\sigma_1\pm A)=\det(A)+4\mp 2i(\beta+\gamma),$$
which is equivalent to
\begin{equation} \label{eq:decoup_cond}
\beta+\gamma=0,\quad \det(A)=-4.
\end{equation}
Using these identities, it is straightforward to deduce that
\begin{equation} \label{eq:sigmaAsquared}
(\sigma_1 A)^2=-4\sigma_0. 
\end{equation}
Multiplying \eqref{eq:TC} by $\sigma_1$ we get
$$(2i\sigma_0-\sigma_1 A)\psi(0_+)=(2i\sigma_0+\sigma_1 A)\psi(0_-).$$
Applying $(2i\sigma_0\pm\sigma_1 A)$ to the both sides of this identity and employing \eqref{eq:sigmaAsquared} we obtain
$$(-8\pm 4i\sigma_1 A)\psi(0_\mp)=0,$$
which is equivalent to
\begin{equation} \label{eq:TCdecoup}
(2i\sigma_1\pm A)\psi(0_\mp)=0.
\end{equation} 
Vice versa, functions obeying \eqref{eq:TCdecoup} clearly satisfy \eqref{eq:TC}. Consequently, $D_A=D_A^+\oplus D_A^-$ with $D_A^\pm$ given by \eqref{eq:decoupOp}.
\end{proof}
\end{proposition}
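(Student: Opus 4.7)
The plan is to split the analysis into three cases according to the invertibility of $(2i\sigma_1 - A)$ and $(2i\sigma_1 + A)$, and to handle the algebraically trivial cases first.

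First, I would dispatch the corner cases $A = \pm 2i\sigma_1$ by direct substitution into the transmission condition \eqref{eq:TC}. For $A = 2i\sigma_1$, the right-hand side becomes $4i\sigma_1\psi(0_-)$ while the left-hand side is zero; since $\sigma_1$ is invertible, this is equivalent to $\psi(0_-) = 0$ with no constraint on $\psi(0_+)$. The roles swap for $A = -2i\sigma_1$. In both situations $D_A$ manifestly decouples as prescribed.

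Next, suppose that $A \neq \pm 2i\sigma_1$ and at least one of $(2i\sigma_1 \pm A)$ is invertible. Multiplying \eqref{eq:TC} by the inverse produces a relation of the form $\psi(0_+) = \Lambda\psi(0_-)$ or $\psi(0_-) = \tilde\Lambda\psi(0_+)$, where $\Lambda = (2i\sigma_1 - A)^{-1}(2i\sigma_1 + A)$ (respectively $\tilde\Lambda$). I would argue that $\Lambda$ cannot be the zero matrix, because $\Lambda = 0$ forces $2i\sigma_1 + A = 0$, contradicting $A \neq -2i\sigma_1$; analogously for $\tilde\Lambda$. Thus the constraint genuinely links the two traces, so one can find $\psi \in \dom(D_A)$ with both $\psi(0_-)$ and $\psi(0_+)$ nonzero and mutually determined, which rules out a direct-sum decomposition into half-line operators.

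The remaining case is that neither $(2i\sigma_1 \pm A)$ is invertible. Writing out $\det(2i\sigma_1 \pm A) = \det(A) + 4 \mp 2i(\beta + \gamma) = 0$ and imposing both signs yields the two scalar identities
\begin{equation*}
\beta + \gamma = 0, \qquad \det(A) = -4.
\end{equation*}
From these, a short direct computation with the explicit form \eqref{eq:A} gives $(\sigma_1 A)^2 = -4\sigma_0$. I would then pre-multiply the transmission condition \eqref{eq:TC} by $\sigma_1$ to obtain $(2i\sigma_0 - \sigma_1 A)\psi(0_+) = (2i\sigma_0 + \sigma_1 A)\psi(0_-)$, and subsequently apply the operators $(2i\sigma_0 \pm \sigma_1 A)$ to both sides: using $(\sigma_1 A)^2 = -4\sigma_0$, the cross terms collapse and each sign choice yields $(2i\sigma_1 \pm A)\psi(0_\mp) = 0$ after multiplying again by $\sigma_1$. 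The converse — that any pair satisfying the two decoupled conditions satisfies \eqref{eq:TC} — is immediate by subtraction. Together these imply $D_A = D_A^+ \oplus D_A^-$ with the asserted domains.

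The main obstacle I anticipate is the algebraic step in the last case: the identity $(\sigma_1 A)^2 = -4\sigma_0$ is not obvious a priori, and one has to guess the right operators to multiply \eqref{eq:TC} by in order to extract the two independent half-line boundary conditions. Everything else is either a direct substitution or an invertibility argument.
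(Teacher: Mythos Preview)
Your proposal is correct and follows essentially the same route as the paper: the same three-case split, the same reason $\Lambda$ (or $\tilde\Lambda$) is nonzero, and the same algebraic manipulation via $(\sigma_1 A)^2=-4\sigma_0$ in the doubly non-invertible case. The only cosmetic difference is that the paper phrases the non-decoupling step as ``the range of $\Lambda$ is at least one-dimensional'' rather than ``$\Lambda\neq 0$,'' which amounts to the same observation.
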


Next, we will distinguish the following three cases.
\begin{description}[leftmargin=*]
\item[case (i)] $(2i\sigma_1-A)$ is invertible. Then the transmission condition \eqref{eq:TC} reads $\psi(0_+)=\Lambda\psi(0_-)$ with 
\begin{equation} \label{eq:LambdaDef}
\Lambda:=(2i\sigma_1-A)^{-1}(2i\sigma_1+A).
\end{equation}
\item[case (ii)] $(2i\sigma_1+A)$ is invertible. Then the condition \eqref{eq:TC} is equivalent to $\psi(0_-)=\tilde\Lambda\psi(0_+)$ with 
\begin{equation} \label{eq:tildeLambdaDef}
\tilde\Lambda:=(2i\sigma_1+A)^{-1}(2i\sigma_1-A).
\end{equation} 
\item[case (iii)] none of $(2i\sigma_1-A)$ and $(2i\sigma_1-A)$ is invertible .  Then, as was shown in Proposition \ref{prop:decoup}, the condition \eqref{eq:TC} is equivalent to 
\begin{equation} \label{eq:TCdecoup2}
(2i\sigma_1\mp A)\psi_\pm(0_\pm)=0.
\end{equation}
\end{description}
Note that if both matrices $(2i\sigma_1-A)$ and $(2i\sigma_1-A)$ are invertible then the cases (i) and (ii) overlap and $\tilde\Lambda=\Lambda^{-1}$.

\begin{proposition} \label{prop:adjoint}
We have $(D_A)^*=D_{A^*}$. In terms of the matrices $\Lambda$ and $\tilde\Lambda$, which are defined in \eqref{eq:LambdaDef} and \eqref{eq:tildeLambdaDef}, respectively, we get
\begin{align*}
&\begin{aligned}
\dom((D_A)^*)=&\{\psi\equiv\psi_-\oplus\psi_+\in H^1(\R_-;\C^2)\oplus H^1(\R_+;\C^2)|\,\\
& \text{condition \eqref{eq:TC_adjoint} below holds true}\,\}
\end{aligned} \\
&(D_A)^*\psi=\mathscr{D}\psi_-\oplus \mathscr{D}\psi_+,
\end{align*}
where
\begin{equation} \label{eq:TC_adjoint}
\begin{cases}
\sigma_1\Lambda^*\sigma_1\psi(0_+)=\psi(0_-) & \text{in the case (i)}\\
\psi(0_+)=\sigma_1\tilde{\Lambda}^*\sigma_1\psi(0_-) & \text{in the case 
(ii)}.
\end{cases}
\end{equation}
\end{proposition}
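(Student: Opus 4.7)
Since $D_{\min} \subset D_A \subset D_{\max}$ and $(D_{\min})^* = D_{\max}$, taking adjoints yields $D_{\min} \subset (D_A)^* \subset D_{\max}$. Hence every $\psi\in\dom((D_A)^*)$ automatically lies in $H^1(\R_-;\C^2)\oplus H^1(\R_+;\C^2)$ and $(D_A)^*$ acts on it as $\mathscr{D}\psi_-\oplus\mathscr{D}\psi_+$. The task therefore reduces to identifying the transmission condition obeyed by such $\psi$.

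Integration by parts on each half-line (the mass term contributes no boundary term because $\sigma_3$ is hermitian and $m$ is real) yields
\[
\langle D_{\max}\phi,\psi\rangle-\langle\phi,D_{\max}\psi\rangle = i\phi(0_-)^*\sigma_1\psi(0_-) - i\phi(0_+)^*\sigma_1\psi(0_+)
\]
for all $\phi,\psi\in H^1(\R_-;\C^2)\oplus H^1(\R_+;\C^2)$, so $\psi\in\dom((D_A)^*)$ iff this form vanishes for every $\phi\in\dom(D_A)$. The admissible boundary data $(\phi(0_-),\phi(0_+))\in\C^4$ fill the $2$--dimensional kernel of the $2\times 4$ matrix $T:=[(2i\sigma_1+A),\,-(2i\sigma_1-A)]$, whose rank is $2$ in every case of Proposition \ref{prop:decoup} (in the decoupled case the two blocks have complementary $1$--dimensional ranges that together span $\C^2$). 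The condition on $\psi$ is therefore $(\sigma_1\psi(0_-),-\sigma_1\psi(0_+))^T\in\ran T^*$, which after computing $T^*$ and multiplying by $\sigma_1$ produces a parameterization of $(\psi(0_-),\psi(0_+))$ by an auxiliary vector $x\in\C^2$. A direct expansion of $(2i\sigma_1\mp A^*)(\sigma_1 A^*\pm 2iI)=\mp(4\sigma_1+A^*\sigma_1 A^*)$ shows that this parameterized family is precisely the $2$--dimensional subspace defined by $(2i\sigma_1-A^*)\psi(0_+)=(2i\sigma_1+A^*)\psi(0_-)$, i.e.\ $\dom(D_{A^*})$. This yields $(D_A)^*=D_{A^*}$.

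For the $\Lambda,\tilde\Lambda$--description, in case (i) the invertibility of $(2i\sigma_1-A)$ allows the $(0_+)$--component of the parameterization to be inverted uniquely in $x$; substituting back into the $(0_-)$--component and using the identities $\sigma_1 M^{-1}=(M\sigma_1)^{-1}$ and $\Lambda^*=(-2i\sigma_1+A^*)(-2i\sigma_1-A^*)^{-1}$ collapses the relation to $\sigma_1\Lambda^*\sigma_1\psi(0_+)=\psi(0_-)$. Case (ii) is symmetric with $\tilde\Lambda$ in place of $\Lambda$. The main obstacle is this last algebraic translation: the form produced from $\ran T^*$ is not manifestly a transmission condition, and recasting it as $\sigma_1\Lambda^*\sigma_1\psi(0_+)=\psi(0_-)$ (respectively $\psi(0_+)=\sigma_1\tilde\Lambda^*\sigma_1\psi(0_-)$) requires careful bookkeeping of $\sigma_1$--conjugations, adjoints and inversions, together with the elimination identity above that removes the auxiliary vector $x$.
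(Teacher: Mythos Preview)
Your argument is correct and yields the same result as the paper, but the route is genuinely different. The paper treats the three cases separately: in cases (i) and (ii) it substitutes the parametrization $\phi(0_+)=\Lambda\phi(0_-)$ (respectively $\phi(0_-)=\tilde\Lambda\phi(0_+)$) directly into the boundary form, first obtaining the $\sigma_1\Lambda^*\sigma_1$ condition and only \emph{then} translating it into the $A^*$ transmission condition; case (iii) is handled by an independent argument exploiting the decoupling and the identity $(\sigma_1 A)^2=-4\sigma_0$. You instead run a single argument for all three cases via the $2\times 4$ matrix $T$, use $(\ker T)^\perp=\ran T^*$ to parametrize the admissible $\psi$--boundary data by $x\in\C^2$, and then verify the $A^*$ condition with the elimination identity $(2i\sigma_1\mp A^*)(\sigma_1 A^*\pm 2iI)=\mp(4\sigma_1+A^*\sigma_1 A^*)$ together with a dimension count (both subspaces are $2$--dimensional). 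Your approach is more uniform and conceptually cleaner; the paper's is more explicit and delivers the $\Lambda,\tilde\Lambda$ description first rather than as an afterthought.

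Two small points to tighten. First, the rank--$2$ claim for $T$ in the decoupled situation: your justification ``complementary $1$--dimensional ranges'' covers the generic case (iii) but not $A=\pm 2i\sigma_1$, where one block vanishes and the other has rank $2$; it would be safer simply to invoke Proposition~\ref{prop:decoup}, which shows the boundary--data space is always $2$--dimensional, hence $\operatorname{rank} T=2$. Second, in the opening chain of inclusions you really get $\overline{D_{\min}}\subset (D_A)^*\subset D_{\max}$; the paper writes the closure explicitly, though for the argument only the right inclusion matters.
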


\begin{proof}
Recall that for any matrix $A\in\C^{2,2}$ we have 
$$D_{\min}\subset D_A\subset D_{\max}=(D_{\min})^{*},$$
which yields 
$$\overline {D_{\min}}=(D_{\min})^{**}\subset (D_A)^{*}\subset (D_{\min})^{*}=D_{\max}.$$
In particular, $(D_A)^{*}$ is a certain restriction of $D_{\max}$. By splitting integrals over $\R$ into integrals over half-lines $\R_\pm$ and then employing the integration by parts, we deduce that $\forall\varphi\in\dom(D_A)$ and $\forall\psi\in\dom(D_{\max})$ 
\begin{equation*}
\langle D_A\varphi,\psi\rangle=\langle\varphi,D_{\max}\psi\rangle-i(\langle\varphi(0_+),\sigma_1\psi(0_+)\rangle_{\C^2}-\langle\varphi(0_-), \sigma_1\psi(0_-)\rangle_{\C^2}).
\end{equation*}
Therefore, $\psi\in\dom((D_A)^{*})$ if and only if
\begin{equation} \label{eq:boundaryTerms}
\forall\varphi\in\dom(D_A):\, \langle\varphi(0_+),\sigma_1\psi(0_+)\rangle_{\C^2}-\langle\varphi(0_-), \sigma_1\psi(0_-)\rangle_{\C^2}=0.
\end{equation}
We will inspect this condition for the cases (i)--(iii) separately.

In the case (i), we can rewrite \eqref{eq:boundaryTerms} as
\begin{equation*}
\forall\varphi\in\dom(D_A):\, \langle\varphi(0_-),\Lambda^*\sigma_1\psi(0_+)-\sigma_1\psi(0_-)\rangle_{\C^2}=0.
\end{equation*}
Since the values of $\varphi(0_-)$ are arbitrary, we arrive at the first line of \eqref{eq:TC_adjoint}. Substituting for $\Lambda$ from \eqref{eq:LambdaDef} and using the identity $\sigma_1=\sigma_1^{-1}$, we get
$$\sigma_1(-2i\sigma+A^*)(-2i\sigma_1-A^*)^{-1}\sigma_1\psi(0_+)=4i(2i\sigma_0+\sigma_1 A^*)^{-1}\psi(0_+)-\psi(0_+)=\psi(0_-).$$
Applying $(2i\sigma_0+\sigma_1 A^*)$ to the both side of the last equality, we obtain
$$(2i\sigma_1-A^*)\psi(0_+)=(2i\sigma_1+A^*)\psi(0_-),$$
i.e., \eqref{eq:TC} with $A^*$ substituted for $A$. In the case (ii), one would proceed similarly. 

In the case (iii), the operator $D_A$ decouples and \eqref{eq:boundaryTerms} is equivalent to
\begin{equation} \label{eq:boundaryTermsiii}
\forall\varphi_\pm\in\dom(D_A^\pm):\, \langle\varphi_\pm(0), \sigma_1\psi_\pm(0)\rangle_{\C^2}=0,
\end{equation}
where we decomposed $\psi=\psi_-\oplus\psi_+$. Now, $\varphi_\pm (0)$ obey $(2i\sigma_1\mp A)\varphi_\pm(0)=0$. Hence, \eqref{eq:boundaryTermsiii} holds if and only if $\sigma_1\psi_\pm(0)\equiv\sigma_1\psi(0_\pm)\in \ker(2i\sigma_1\mp A)^\bot$. This is equivalent to $\psi(0_\pm)\in\ran(\sigma_1 A^*\pm2i\sigma_0)$. In view of \eqref{eq:TCdecoup2}, it remains to to show that 
\begin{equation*}
\ran(\sigma_1 A^*\pm2i\sigma_0)=\ker(2i\sigma_1\mp A^*).
\end{equation*}
If $\xi\in\ran(\sigma_1 A^*\pm2i\sigma_0)$ then there exists $\eta\in\C^2$ such that $(\sigma_1 A^*\pm2i\sigma_0)\eta=\xi$. Applying $(2i\sigma_1\mp A^*)$ to the both sides of this equation, we get 
$$\mp(4\sigma_1+A^*\sigma_1 A^*)\eta=(2i\sigma_1\mp A^*)\xi.$$
Taking the adjoint of \eqref{eq:sigmaAsquared} we see that the left-hand side is zero, i.e., $\xi\in\ker(2i\sigma_1\mp A^*)$. Vice versa, if we start with $\xi\in\ker(2i\sigma_1\mp A^*)$ then $\pm 2i\xi=\sigma_1 A^*\xi$. Adding $\pm 2i\xi$ to the both sides of this equation we deduce that
$$\xi=\pm\frac{1}{4i}(\sigma_1 A^*\pm 2i\sigma_0)\xi,$$
which means that $\xi\in\ran(\sigma_1 A^*\pm2i\sigma_0)$.
\end{proof}

\begin{corollary} \label{cor:sa}
The operator $D_A$ is self-adjoint if and only if $A$ is hermitian. If $D_A$ is normal then it is necessarily self-adjoint.
\end{corollary}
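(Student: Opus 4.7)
The plan is to extract both assertions directly from Proposition \ref{prop:adjoint} together with the uniqueness statement in Proposition \ref{prop:basic}.

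For the self-adjointness equivalence, first I would observe that by Proposition \ref{prop:adjoint} we have $(D_A)^*=D_{A^*}$. Hence $D_A=(D_A)^*$ is precisely the equality $D_A=D_{A^*}$, which by the uniqueness part of Proposition \ref{prop:basic} is equivalent to $A=A^*$. The easy direction ($A$ hermitian $\Rightarrow$ $D_A$ self-adjoint) and its converse are therefore both covered in one line each.

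For the claim about normality, I would use the standard characterization of normality for closed densely defined operators: $D_A$ is normal if and only if $\dom(D_A)=\dom((D_A)^*)$ and $\|D_A\psi\|=\|(D_A)^*\psi\|$ for all $\psi$ in this common domain. The key observation is that only the domain equality is needed here: by Proposition \ref{prop:adjoint}, $D_A$ and $(D_A)^*=D_{A^*}$ are both restrictions of $D_{\max}$, acting in the same way ($\psi\mapsto\mathscr{D}\psi_-\oplus\mathscr{D}\psi_+$) on their respective domains. Therefore, whenever $\dom(D_A)=\dom(D_{A^*})$, the two operators coincide as operators, i.e., $D_A=D_{A^*}$. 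Invoking Proposition \ref{prop:basic} once more gives $A=A^*$, so $D_A$ is self-adjoint.

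Neither step contains a real obstacle; the whole argument is a bookkeeping consequence of the previous two propositions. The only point that deserves a moment of care is the reduction of normality to mere equality of domains, which hinges on the fact that the adjoint produced by Proposition \ref{prop:adjoint} is still of the form $D_B$ with the same differential action $\mathscr{D}$, so that no nontrivial comparison of actions is required.
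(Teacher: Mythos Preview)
Your proof is correct and follows essentially the same approach as the paper. The paper's argument is terser---it simply notes that both $D_A$ and $(D_A)^*$ are restrictions of $D_{\max}$ to conclude the normality claim---but your explicit unpacking of why domain equality suffices (same differential action on a common domain forces operator equality) is exactly the reasoning behind that one-line remark.
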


\begin{proof}
By Proposition \ref{prop:adjoint}, $D_A$ is self-adjoint if and only if $D_A=D_{A^*}$. According to the last statement of Proposition \ref{prop:basic} this happens if and only if $A=A^*$.

The second statement of the theorem follows from the fact that, in view of Proposition \ref{prop:adjoint}, both $D_A$ and $(D_A)^{*}$ are restrictions of $D_{\max}$.
\end{proof}

\subsection{Digression-boundary triplets}  \label{sec:bTriplets}
Every $D_A$ is a closed proper extension of the symmetric operator $D_{\min}$. There is a general framework built on the notion of boundary triplet that may be used to study all such extensions and their properties. For an overview of the subject we refer to \cite{DeMa_85,BrGePa_08,Snoo, Schmudgen}.  Modifying the boundary triple constructed in \cite{PaRi_14}, where $\sigma_2$ instead of $\sigma_1$ appears in the definition of $D_{\min}$, to our setting  we obtain a boundary triplet $(\C^2,\Gamma_1,\Gamma_2)$ for $(D_{\min})^*=D_{\max}$, where
\begin{equation*}
\Gamma_1(\psi):=\sigma_3(\psi(0_-)-\psi(0_+)),\quad \Gamma_2(\psi):=\frac{1}{2}\sigma_2(\psi(0_+)+\psi(0_-)) \quad (\forall\psi\in\dom(D_{\max})).
\end{equation*}
Indeed, it is straightforward to verify that for all $\varphi,\psi\in\dom(D_{\max})$, 
$$\langle\varphi,D_{\max}\psi\rangle-\langle D_{\max}\varphi,\psi\rangle=\langle\Gamma_1\varphi,\Gamma_2\psi\rangle_{\C^2}-\langle\Gamma_2\varphi,\Gamma_1\psi\rangle_{\C^2}$$
by calculations similar to those included in the proof of Proposition \ref{prop:adjoint}, and that the mapping $\psi\in D_{\max}\mapsto(\Gamma_1\psi,\Gamma_2\psi)\in\C^2\times\C^2$ is surjective, i.e., the two defining properties of a boundary triplet hold true.

Now, we rewrite the transmission condition \eqref{eq:TC}  for $\dom(D_A)$ as follows
$$-2i\sigma_1(\psi(0_-)-\psi(0_+))=A(\psi(0_+)+\psi(0_-)).$$
Applying $-\frac{1}{2}\sigma_2$ to the both sides of this equation and using the relation $\sigma_0=\sigma_2^2$, we get the following equivalent equation
\begin{equation} \label{eq:TCtriple}
\Gamma_1\psi=-\sigma_2 A\sigma_2\Gamma_2\psi.
\end{equation}
Therefore, we have 
$$\dom(D_A)=\{\psi\in\dom(D_{\max})|\, (\Gamma_1\psi,\Gamma_2\psi)\in\mathcal{A}\},$$
where $\mathcal{A}:=\{(-\sigma_2 A\sigma_2 y, y)|\, y\in\C^2\}$ is a linear relation on $\C^2$ called the boundary space of $D_A$. By \cite[Lem. 14.6]{Schmudgen}, $(D_A)^*$ is the restriction of $D_{\max}$ whose boundary space  is equal to the adjoint relation $\mathcal{A}^{*}$. This would yield an alternative proof of Proposition \ref{prop:adjoint}. Similarly, the resolvent of $D_A$ that will be obtained in Theorem \ref{theo:res} by guessing followed by a direct verification could have been deduced using the theory of boundary triplets, see Remark \ref{rem:resolvent}. However, to make the text easier to follow also for the readers who are not acquainted with this theory, we preferred to include direct proofs whenever they are not considerably longer or more tedious then application of the boundary triplets techniques (which are apparently indispensable in higher dimensional settings).

\begin{remark} \label{rem:sa}
For every hermitian matrix $A$ there exists a unique unitary matrix $U$ such that $1\notin\sigma(U)$ and 
\begin{equation} \label{eq:Cayley}
A=-i(\sigma_0-U)^{-1}(\sigma_0+U).
\end{equation}
The matrix $U$ is referred to as the Cayley transform of $A$. Note that we added an extra minus sign for future convenience and recall that the Cayley transform is a one-to-one mapping between hermitian matrices and unitary matrices which do not have $\lambda=1$ as an eigenvalue. Plugging \eqref{eq:Cayley} into \eqref{eq:TCtriple} and using the identity $\sigma_0=\sigma_2^2$ we get
\begin{equation*}
\Gamma_1\psi=i(\sigma_0-\tilde U)^{-1}(\sigma_0+\tilde U)\Gamma_2\psi,
\end{equation*}
where $\tilde U:=\sigma_2 U\sigma_2$ is unitary and $1\notin\sigma(\tilde U)$. This is equivalent to 
\begin{equation} \label{eq:TC_U}
\frac{1}{2}(\sigma_0-\tilde U)\Gamma_1\psi=\frac{i}{2}(\sigma_0+\tilde U)\Gamma_2\psi.
\end{equation}
There is one-to-one correspondence between all self-adjoint extensions of $D_{\min}$ and unitary $2\times 2$ matrices that maps $\tilde U\in \C^{2,2}$ to the restriction of $D_{\max}$ to the functions that obey \eqref{eq:TC_U}, cf. \cite[Thm. 1.12]{BrGePa_08} and \cite[Prop. 2]{Pa_06} (the latter result is a special case of \cite[Thm. 1.4 in Chpt. 3]{Gorbachuk}). Since starting with a hermitian $A$, $\tilde U$ is not an arbitrary unitary matrix but a unitary matrix such that $1\notin\sigma(\tilde U)$, not every self-adjoint extension of $D_{\min}$ is described as $D_A$ with a certain (necessarily hermitian) matrix $A$.
\end{remark}

\section{Spectrum and resolvent} \label{sec:spectrum}

\subsection{Free Dirac operator}
Recall that $D_0$ is defined on $\dom(D_0)=H^1(\R;\C^2)$ and, therefore, it is nothing but the one-dimensional free Dirac operator.
It is well known that $D_0$ is self-adjoint and the spectrum of $D_0$ is purely absolutely continuous and equal to
$$\sigma(D_0)=\sigma_{ac}(D_0)=(-\infty,-|m|]\cup[|m|,+\infty).$$
The most direct way how to see it is to employ the Fourier-Plancherel transform, since then $D_0$ is unitarily equivalent to a multiplication operator by a smooth matrix-valued function, cf. \cite[Thm. 1.1]{Thaller} for similar considerations in the three-dimensional setting.

For every $z\notin\sigma(D_0)$, the resolvent $(D_0-z)^{-1}$ is the integral operator with kernel 
\begin{equation}\label{eq:freeResolvent}
R_z(x,y)= \frac{i}{2}(Z(z) + \sgn (x-y)\sigma_1)\ee^{ik(z)|x-y|},
\end{equation}
where 
$$Z(z):=\begin{pmatrix}
\zeta(z) & 0\\
0 & \zeta(z)^{-1}
\end{pmatrix},
\quad \zeta(z):=\frac{z+m}{k(z)},\quad k(z):=\sqrt{z^{2}-m^{2}},$$
cf. \cite{BeDa_94}. Let us stress that for the complex square-root we always adopt the convention that $\Im\sqrt{w}>0$ for all $w\in\C\setminus[0,+\infty)$.

\subsection{Eigenvalues}
Let $z\in\C$. The eigenvalue equation $D_A\psi=z\psi$ yields the following differential equation
\begin{equation*}
\frac{\dd\psi}{\dd x}=i M \psi \quad \text{with} \quad M:=\begin{pmatrix}
0 & z+m\\
z-m & 0
\end{pmatrix},
\end{equation*}
which should hold true at least in the weak sense on $\R_\pm$. The general solution reads as
\begin{equation*}
\begin{split}
\psi(x)&=\exp(ix M)\begin{pmatrix} a\\b \end{pmatrix}\\
&=\begin{cases}
\big(\cos(k(z)x)\sigma_0+ik(z)^{-1}\sin(k(z)x)M\big) \begin{pmatrix} a\\b \end{pmatrix} & \text{if }k(z)\neq 0\\
(\sigma_0+i x M) \begin{pmatrix} a\\b \end{pmatrix} &\text{if }k(z)=0,
\end{cases}
\end{split}
\end{equation*}
where $a,b\in\C$. The case $k(z)=0$ occurs if and only if $z=\pm m$. Then the solution $\psi$ is linear,  and so it is never integrable near infinities, except for the trivial case $a=b=0$. We conclude that $z=\pm m$ are never eigenvalues of $D_A$. 

Let us now focus on the case $k(z)\neq 0$. Writing $k(z)=\eta+i\gamma$,  where $\eta\in\R$ and $\gamma\geq 0$, we get $\psi(x)=(\psi_1(x),\psi_2(x))^T$ with
\begin{equation} \label{eq:EVeqSol}
\begin{split}
&\begin{split}\psi_1(x)=&(a\cosh(\gamma x)-b \zeta(z)\sinh(\gamma x))\cos(\eta x)\\
&+i(b\zeta(z)\cosh(\gamma x)-a\sinh(\gamma x))\sin(\eta x)
\end{split}\\
&\begin{split}\psi_2(x)=&(b\cosh(\gamma x)-a \zeta(z)^{-1}\sinh(\gamma x))\cos(\eta x)\\
&+i(a\zeta(z)^{-1}\cosh(\gamma x)-b\sinh(\gamma x))\sin(\eta x).
\end{split}
\end{split}
\end{equation}
If $\gamma=0$, i.e., when $z^2-m^2>0$, then $\psi$ is not square integrable near infinities, unless $a=b=0$. Hence, there are no eigenvalues in $(-\infty,-|m|)\cup(|m|,+\infty)$. Finally, let $\gamma>0$. Then, $\psi\in L^2(0,+\infty)$ if and only if $a=\zeta(z)b$ and $\psi\in L^2(-\infty,0)$ if and only if $a=-\zeta(z)b$. Substituting these conditions into \eqref{eq:EVeqSol} we get
\begin{equation} \label{eq:IntSol}
\psi(x)=\begin{cases}
a\begin{pmatrix} 1 \\ \zeta(z)^{-1}\end{pmatrix}\ee^{ik(z)x} &\text{for }x>0\\
\tilde a\begin{pmatrix} 1 \\ -\zeta(z)^{-1}\end{pmatrix}\ee^{-ik(z)x} &\text{for }x<0,
\end{cases}
\end{equation}
where $a,\tilde a\in\C$. Now, $z\in\C\setminus((-\infty,-|m|]\cup[|m|,+\infty))$ is an eigenvalue of $D_A$ if and only if there are constants $a,\tilde a\in\C$, at least one of them being non-zero, such that \eqref{eq:IntSol} obeys the transmission condition \eqref{eq:TC}, i.e., 
\begin{equation*}
a(2i\sigma_1-A)\begin{pmatrix} 1 \\ \zeta(z)^{-1}\end{pmatrix}=\tilde a(2i\sigma_1+A)\begin{pmatrix} 1 \\ -\zeta(z)^{-1}\end{pmatrix}.
\end{equation*}
This is equivalent to
\begin{equation} \label{eq:EFcoeff}
(A-2i Z(z)^{-1})\begin{pmatrix} a+\tilde a \\ (a-\tilde a)\zeta(z)^{-1}\end{pmatrix}=0.
\end{equation}
Since $(a+\tilde a, (a-\tilde a)\zeta(z)^{-1})\neq (0,0)$ if and only if $(a,\tilde a)\neq (0,0)$, we infer that our spectral condition is equivalent to $\det(A-2iZ(z)^{-1})=0$. Hence, we arrive at

\begin{proposition} \label{prop:PointSpec}
We have $\sigma_p(D_A)\cap((-\infty,-|m|]\cup[|m|,+\infty))=\emptyset$ and $z\in\C\setminus((-\infty,-|m|]\cup[|m|,+\infty))$ is an eigenvalue of $D_A$ if and only if 
\begin{equation*}
\det\Big(\sigma_0+\frac{i}{2}AZ(z)\Big)=0.
\end{equation*}
In the positive case, the geometric multiplicity of  $z$ equals $\dim (\ker(\sigma_0+\frac{i}{2}AZ(z)))\leq 2$. The associated eigenfunctions are given by \eqref{eq:IntSol}, where $(a,\tilde a)$ are non-trivial solutions of \eqref{eq:EFcoeff}.
\end{proposition}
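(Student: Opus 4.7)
The proof will assemble what the calculations preceding the statement have already established. In that discussion it was shown that at $z=\pm m$ the differential equation $\mathscr D\psi=z\psi$ on each half-line admits only affine solutions, which are never $L^2$; that for real $z$ with $|z|>|m|$ the relevant solutions \eqref{eq:EVeqSol} are purely oscillatory and again not $L^2$; and that for $z\in\C\setminus((-\infty,-|m|]\cup[|m|,+\infty))$ the $L^2$ (equivalently, since $\psi'=iM\psi$ with $M$ constant, $H^1$) solutions of $\mathscr D\psi_\pm=z\psi_\pm$ on $\R_\pm$ are exactly those in \eqref{eq:IntSol}, parametrized by $(a,\tilde a)\in\C^2$, and that imposing the transmission condition \eqref{eq:TC} on such $\psi$ is equivalent to \eqref{eq:EFcoeff}. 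This already gives $\sigma_p(D_A)\cap((-\infty,-|m|]\cup[|m|,+\infty))=\emptyset$.

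To rephrase the solvability of \eqref{eq:EFcoeff} in the form claimed, the plan is to factor
$$A-2iZ(z)^{-1}=-2i\Bigl(\sigma_0+\tfrac{i}{2}AZ(z)\Bigr)Z(z)^{-1},$$
and observe that $\det Z(z)=\zeta(z)\cdot\zeta(z)^{-1}=1$, so $Z(z)^{-1}$ is invertible. Consequently $\ker(A-2iZ(z)^{-1})$ and $\ker(\sigma_0+\tfrac{i}{2}AZ(z))$ have the same dimension, and their determinants vanish simultaneously. This yields the equivalence stated in the proposition.

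For the geometric multiplicity, I would note that on the admissible set $z\neq -m$, so $\zeta(z)\neq 0$ and the linear map $(a,\tilde a)\mapsto\bigl(a+\tilde a,(a-\tilde a)\zeta(z)^{-1}\bigr)$ is a bijection of $\C^2$, while \eqref{eq:IntSol} defines an injective linear map from $(a,\tilde a)$ to eigenfunctions of $D_A$ (since its restrictions to $\R_+$ and $\R_-$ already determine $a$ and $\tilde a$). Composing these two identifications shows that the eigenspace at $z$ has dimension equal to $\dim\ker(\sigma_0+\tfrac{i}{2}AZ(z))$, manifestly at most $2$.

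There is no substantive obstacle: the only items requiring a brief check are the displayed factorization, the identity $\det Z(z)=1$, the non-vanishing of $\zeta(z)$ on the admissible set, and the automatic $L^2\Rightarrow H^1$ passage for solutions of the first-order system. All of these are routine.
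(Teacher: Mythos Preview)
Your proposal is correct and follows essentially the same route as the paper: the proposition is a summary of the preceding eigenvalue discussion, and your explicit factorization $A-2iZ(z)^{-1}=-2i\bigl(\sigma_0+\tfrac{i}{2}AZ(z)\bigr)Z(z)^{-1}$ together with $\det Z(z)=1$ is precisely the implicit step the paper uses to pass from $\det(A-2iZ(z)^{-1})=0$ to the stated condition. Your treatment of the geometric multiplicity via the two bijections is likewise a slightly more explicit version of the paper's observation that $(a,\tilde a)\neq(0,0)$ iff $(a+\tilde a,(a-\tilde a)\zeta(z)^{-1})\neq(0,0)$.
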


\subsection{Resolvent}
In the self-adjoint case, i.e., when $A$ is hermitian, the resolvent of $D_A$ is well known and given by the so-called Krein resolvent formula, cf. \cite{BeDa_94} and \cite{AG} for the particular result and a general theory, respectively. For every $z\in\C\setminus\R$, the resolvent $R^A_z=(D_A-z)^{-1}$ is the integral operator with the kernel $R^A_z(x,y)$ of the form
\begin{equation} \label{eq:resAnsatz}
R^A_z(x,y)=R_z(x,y)-R_z(x,0)\mathscr{M}(z)R_z(0,y),
\end{equation}
where $\mathscr{M}(z)$ is a certain $2\times 2$ matrix that depends on $A$ and $z$. We will now use \eqref{eq:resAnsatz} as an ansatz for the resolvent of arbitrary $D_A$  (with $A$ not necessarily hermitian). If $R^A_z(x,y)$ was the integral kernel of $(D_A-z)^{-1}$ then, in particular, 
$$g(x):=\int_\R R^A_z(x,y)\psi(y)\dd y$$
would obey the transmission condition \eqref{eq:TC} for every $\psi\in L^2(\R;\C^2)$. Since 
$$g(0_+)=f(0)-R_z(0_+,0)\mathscr{M}(z)f(0) \quad \text{and} \quad g(0_-)=f(0)-R_z(0_-,0)\mathscr{M}(z)f(0),$$
where we introduced $f:=R_z\psi\in\dom(D_0)=H^1(\R;\C^2)$, this would mean that
\begin{equation*}
2Af(0)=\Big(2i\sigma_1(R_z(0_-,0)-R_z(0_+,0))+A(R_z(0_-,0)+R_z(0_+,0))\Big)\mathscr{M}(z)f(0).
\end{equation*}
Noting that 
\begin{equation*}
R_z(0_-,0)-R_z(0_+,0)=-i\sigma_1 \quad \text{and} \quad R_z(0_-,0)+R_z(0_+,0)=iZ(z),
\end{equation*}
this can be further rearranged as
\begin{equation*}
Af(0)=\Big(\sigma_0+\frac{i}{2}AZ(z)\Big)\mathscr{M}(z)f(0).
\end{equation*}
Assuming that $\sigma_0+\frac{i}{2}AZ(z)$ is invertible and taking the fact that $f(0)\in\C^2$ is arbitrary into account, this would yield
\begin{equation*}
\mathscr{M}(z)=\Big(\sigma_0+\frac{i}{2}AZ(z)\Big)^{-1}A.
\end{equation*} 
These considerations suggest the following result.

\begin{theorem} \label{theo:res}
For every $z\in\C\setminus((-\infty,-|m|]\cup[|m|,+\infty))$ such that $\det(\sigma_0+\frac{i}{2}AZ(z))\neq 0$, the resolvent of $D_A$ is the integral operator $R_z^A$ with kernel
\begin{equation} \label{eq:resolvent}
R_z^A(x,y)=R_z(x,y)-R_z(x,0)\Big(\sigma_0+\frac{i}{2}AZ(z)\Big)^{-1}A R_z(0,y).
\end{equation}
\end{theorem}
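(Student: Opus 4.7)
The natural strategy is direct verification: denote by $R_z^A$ the integral operator with kernel \eqref{eq:resolvent}, fix arbitrary $\psi\in L^2(\R;\C^2)$, and set $g:=R_z^A\psi$. The plan is to show (a) $g\in\dom(D_A)$, (b) $(D_A-z)g=\psi$, and (c) $D_A-z$ is injective, which together force $R_z^A=(D_A-z)^{-1}$. The motivational computation preceding the statement already provides the algebraic skeleton; the task is to run it backwards in a rigorous way.

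Introduce the shorthand $f:=R_z\psi\in\dom(D_0)=H^1(\R;\C^2)$ and $\mathscr{M}(z):=(\sigma_0+\tfrac{i}{2}AZ(z))^{-1}A$, which is well defined by hypothesis. Then
\begin{equation*}
g(x)=f(x)-R_z(x,0)\mathscr{M}(z)f(0).
\end{equation*}
Since $\Im k(z)>0$ for $z\notin(-\infty,-|m|]\cup[|m|,+\infty)$, the map $x\mapsto R_z(x,0)\xi$ decays exponentially, is smooth on each half-line, and lies in $H^1(\R_-;\C^2)\oplus H^1(\R_+;\C^2)$ (it merely has a jump at $x=0$ inherited from the $\sgn(x-y)\sigma_1$ factor in \eqref{eq:freeResolvent}). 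Consequently $g$ belongs to the same space, giving the regularity half of (a).

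For the transmission condition, apply the jump relations $R_z(0_-,0)-R_z(0_+,0)=-i\sigma_1$ and $R_z(0_-,0)+R_z(0_+,0)=iZ(z)$ recorded just before the theorem. A short computation expresses
\begin{equation*}
(2i\sigma_1-A)g(0_+)-(2i\sigma_1+A)g(0_-)=\Big(2A-\big(2i\sigma_1(R_z(0_-,0)-R_z(0_+,0))+A(R_z(0_-,0)+R_z(0_+,0))\big)\mathscr{M}(z)\Big)f(0),
\end{equation*}
and the parenthesis vanishes precisely because $\mathscr{M}(z)$ was chosen to satisfy $(\sigma_0+\tfrac{i}{2}AZ(z))\mathscr{M}(z)=A$. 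This yields \eqref{eq:TC} for $g$, completing (a). For (b) one uses that, on each half-line, $x\mapsto R_z(x,0)\xi$ classically solves $(\mathscr{D}-z)R_z(\cdot,0)\xi=0$ (as seen directly from \eqref{eq:freeResolvent}), while $\mathscr{D}f=\psi+zf$. Adding these gives $(D_A-z)g=\psi$ on $\R_\pm$. Finally, (c) is immediate from Proposition \ref{prop:PointSpec}: the invertibility of $\sigma_0+\tfrac{i}{2}AZ(z)$ together with $z\notin(-\infty,-|m|]\cup[|m|,+\infty)$ forbids $z\in\sigma_p(D_A)$.

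The main obstacle is the algebraic verification of \eqref{eq:TC} in step (a); it is not deep, but it requires the jump identities and the specific form of $\mathscr{M}(z)$ to line up correctly. Everything else is essentially bookkeeping: checking regularity on each half-line, applying $\mathscr{D}$ pointwise, and invoking the point-spectrum criterion for injectivity. One could alternatively derive \eqref{eq:resolvent} from the Krein-type formula attached to the boundary triplet of Section \ref{sec:bTriplets} (as flagged in Remark \ref{rem:resolvent}), but the direct route above avoids any extra machinery.
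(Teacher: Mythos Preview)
Your proof is correct and follows precisely the paper's own argument: the paper's proof likewise establishes injectivity of $D_A-z$ via Proposition~\ref{prop:PointSpec}, appeals to the pre-theorem computation (your step (a)) to get $R_z^A\psi\in\dom(D_A)$ ``by construction'', and uses the fundamental-solution property of $R_z(\cdot,0)$ to obtain $(D_A-z)R_z^A\psi=\psi$. You have simply written out explicitly the verification of the transmission condition that the paper compresses into the phrase ``by construction''.
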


\begin{proof}
For the considered values of $z$, $(D_A-z)$ is injective by Proposition \ref{prop:PointSpec}. Moreover, by construction, $R_z^A$ maps $L^2(\R;\C^2)$ to $\dom(D^A)$ and for every $\psi\in L^2(\R;\C^2)$,
$$(D_A-z)R_z^A\psi=\psi,$$
because $((\mathscr{D}-z)R_z(\cdot,0))(x)=0$ for all $x\neq 0$ and $((\mathscr{D}-z)(R_z\psi))(x)=\psi(x)$ for a.e. $x\in\R$ ($R_z(x,0)$ is the fundamental solution of $(\mathscr{D}-z)$). Consequently, $z\in\res(D_A)$ and $(D_A-z)^{-1}=R_z^A$.
\end{proof}

\begin{remark} \label{rem:resolvent}
Alternatively, one can deduce the resolvent $R_z^A$ using the boundary triplet $(\C^2,\Gamma_1,\Gamma_2)$ introduced in Section \ref{sec:bTriplets} and the general Krein-Naimark resolvent formula, cf. \cite[Thm. 14.18]{Schmudgen} or \cite[Cor. 2.6.3]{Snoo}. To this purpose, note that $D_0=D_{\max}\upharpoonright\ker(\Gamma_1)$. For the sake of completeness, we just include the formulae for the gamma field $\gamma(z):=(\Gamma_1\upharpoonright\ker(D_{\max}-z))^{-1}:\, \C^2\to\ker(D_{\max}-z)$ and the Weyl function  $M(z):=\Gamma_2\circ\gamma(z):\,\C^2\to\C^2$ of $D_0$ associated with the triplet $(\C^2,\Gamma_1,\Gamma_2)$  which are building blocks of the Krein-Naimark formula,
\begin{equation*}
(\gamma(z)v)(x)=i R_z(x,0)\sigma_2 v,\quad M(z)v=\frac{i}{2}\sigma_2 Z(z)\sigma_2 v \quad (\forall v\in\C^2,\, \text{a.e. } x\in\R).
\end{equation*}
\end{remark}

\subsection{Full spectrum and spectral transitions}
The point spectrum of $D_A$ is fully described in Proposition \ref{prop:PointSpec}. The condition for $z\in\C\setminus((-\infty,-|m|]\cup[|m|,+\infty))$ to be in $\sigma_p(D_A)$ can be rewritten as follows
\begin{equation} \label{eq:EVeq}
4-\det(A)+2i\tr(AZ(z))=0.
\end{equation}
Since $Z(z)=\mathrm{diag}(\zeta(z),\zeta(z)^{-1})$, it will be useful to know how does the inverse of $\zeta$ look like.

\begin{lemma} \label{lemma:zeta_inv}
Let $m\in\R\setminus\{0\}$ and $\eta\in\C$. Then the equation
\begin{equation} \label{eq:zeta_inv}
\zeta(z)=\eta
\end{equation}
has a solution $z\in\C\setminus((-\infty,-|m|]\cup[|m|,+\infty))$ if and only if
\begin{equation} \label{eq:zeta_inv_cond}
\eta\in\C\setminus\R\quad \wedge \quad  \sgn{m}=\sgn\left(\Im\frac{\eta}{\eta^2-1}\right).
\end{equation}
In the positive case, the solution is unique and given by
\begin{equation} \label{eq:z_sol}
z_\eta=m\frac{\eta^2+1}{\eta^2-1}.
\end{equation}
\end{lemma}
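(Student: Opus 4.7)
The natural plan is to square the defining relation to get a linear equation in $z$, then match the square-root branch. Starting from $\zeta(z) = (z+m)/k(z) = \eta$ and using $k(z)^2 = (z-m)(z+m)$, I square both sides to obtain $(z+m)^2 = \eta^2(z-m)(z+m)$. The factor $z+m$ cancels (the discarded root $z = -m$ lies in the forbidden set $(-\infty,-|m|]\cup[|m|,+\infty)$ and is thus inadmissible), leaving $z+m = \eta^2(z-m)$, whose unique solution is precisely $z_\eta$ from \eqref{eq:z_sol}; the cases $\eta = \pm 1$ are ruled out because they would force $m=0$.

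Next I would verify the admissibility condition $z_\eta \notin (-\infty,-|m|]\cup[|m|,+\infty)$, equivalently $z_\eta^2 - m^2 \notin [0,+\infty)$. A direct computation yields $z_\eta^2 - m^2 = \bigl(2m\eta/(\eta^2-1)\bigr)^2$, so admissibility reduces to $w := \eta/(\eta^2-1) \notin \R$. Using the partial-fraction identity $w = \tfrac12\bigl(\tfrac{1}{\eta-1} + \tfrac{1}{\eta+1}\bigr)$, one computes $\Im w = -\tfrac{\Im\eta}{2}\bigl(|\eta-1|^{-2} + |\eta+1|^{-2}\bigr)$; the parenthesized factor is strictly positive, so $\Im w = 0$ if and only if $\eta \in \R$. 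This matches the first clause of \eqref{eq:zeta_inv_cond} and also shows that $\Im w$ has a definite nonzero sign whenever $\eta \in \C \setminus \R$.

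The main obstacle is the branch-matching step: squaring produces $\zeta(z_\eta)^2 = \eta^2$, not $\zeta(z_\eta) = \eta$. From $k(z_\eta)^2 = \bigl(2m\eta/(\eta^2-1)\bigr)^2$, the convention $\Im\sqrt{\cdot} > 0$ pins down $k(z_\eta) = \sigma \cdot 2m\eta/(\eta^2-1)$ with $\sigma \in \{+1,-1\}$ chosen so that $\Im k(z_\eta) > 0$. Using $z_\eta + m = 2m\eta^2/(\eta^2-1)$, division gives $\zeta(z_\eta) = \sigma\eta$, so I need $\sigma = +1$; this is exactly the condition $\Im\bigl(2m\eta/(\eta^2-1)\bigr) > 0$, which by the previous paragraph translates to $\sgn m = \sgn \Im w$ — the second clause of \eqref{eq:zeta_inv_cond}. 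Uniqueness is automatic because the squared equation is linear with a single root.
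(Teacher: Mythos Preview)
Your proof is correct and follows essentially the same strategy as the paper's: square the defining relation, identify the unique candidate $z_\eta$, check that it lies in the admissible domain precisely when $\eta\notin\R$, and then verify the branch condition on $k(z_\eta)$ to obtain the sign requirement on $m$. Your execution is slightly more streamlined---cancelling the common factor $z+m$ at the outset yields a linear equation directly (avoiding the paper's quadratic formula with its case split on $\Im(\eta^2)$), and the partial-fraction identity for $\Im w$ makes the equivalence $w\notin\R\Leftrightarrow\eta\notin\R$ fully explicit---but the overall logic is the same.
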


\begin{proof}
By a direct inspection, one verifies that there are no solutions $z$ of \eqref{eq:zeta_inv} for $\eta=\pm 1$ and that the only solution of \eqref{eq:zeta_inv} for $\eta=0$ is $z=-m\in (-\infty,-|m|]\cup[|m|,+\infty)$. Let $\eta\neq\pm 1$ and $\eta\neq 0$. Then squaring \eqref{eq:zeta_inv} we get a quadratic equation in $z$ whose solutions are
\begin{equation*}
z_\pm=\frac{m\pm|m|\sqrt{\eta^4}}{\eta^2-1}=
\begin{cases}
\frac{m\pm|m|\sgn(\Im(\eta^2))\eta^2}{\eta^2-1} & \text{if }\Im(\eta^2)\neq 0\\
\frac{m\pm|m|\sgn(\eta^2)\eta^2}{\eta^2-1} & \text{if }\Im(\eta^2)= 0.
\end{cases}
\end{equation*}
One of these solutions is always $z=-m$ and the other is always given by \eqref{eq:z_sol}. Only the latter may belong to $\C\setminus((-\infty,-|m|]\cup[|m|,+\infty))$. This happens if and only if $\Im(\eta^2)\neq 0$ or $\eta^2<0$, i.e., for $\eta\in\C\setminus\R$. Finally, we check whether $z_\eta$ satisfies the original equation \eqref{eq:zeta_inv}. We have
$$\zeta(z_\eta)=\frac{\eta}{\sgn\left(\Im\frac{m\eta}{\eta^2-1}\right)}.$$
Therefore, we need the second condition in \eqref{eq:zeta_inv_cond} to get $\zeta(z_\eta)=\eta$.
\end{proof}

Now, we will distinguish several cases.
\begin{enumerate}[label=\arabic*),leftmargin=*]
\item $m=0$. For $m=0$, $\zeta(z)=\sgn(\Im z)$ and \eqref{eq:EVeq}  is equivalent to
\begin{equation*}
(\det(A)-4)\sgn(\Im z)=2i\tr(A).
\end{equation*}
Therefore, if, in addition to $m=0$,
\begin{enumerate}[label=1.\alph*)]
\item $\tr(A)=0 \wedge \det(A)=4$ then $\sigma_p(D_A)=\C\setminus\R$.
\item $\tr(A)=0 \wedge \det(A)\neq 4$ then $\sigma_p(D_A)=\emptyset$.
\item $\tr(A)\neq 0 \wedge \det(A)-4=\pm 2i\tr(A)$ then $\sigma_p(D_A)=\C_\pm:=\{z\in\C|\, \pm\Im z>0\}$.
\item $\tr(A)\neq 0 \wedge \det(A)-4\neq\pm 2i\tr(A)$ then $\sigma_p(D_A)=\emptyset$.
\end{enumerate}
\item $m\neq 0$. For $m\neq 0$, it is convenient to rewrite \eqref{eq:EVeq} as follows
\begin{equation} \label{eq:quadratic}
\alpha\zeta(z)^2+i\frac{\det(A)-4}{2}\zeta(z)+\delta=0,
\end{equation}
where $(\alpha,\delta)$ is the diagonal of $A$. Hence, if, in addition to $m\neq 0$,
\begin{enumerate}[label=2.\alph*)]
\item $\alpha=\delta=0 \wedge \det(A)=4$ then  $\sigma_p(D_A)=\C\setminus((-\infty,-|m|]\cup[|m|,+\infty))$.
\item $\alpha=\delta=0 \wedge \det(A)\neq 4$ then $\sigma_p(D_A)=\emptyset$.
\item $\alpha=0 \wedge \delta\neq 0 \wedge \det(A)=4$ then $\sigma_p(D_A)=\emptyset$.
\item $\alpha=0 \wedge \delta\neq 0 \wedge \det(A)\neq 4$ then we get
\begin{equation*} 
\zeta(z)=i\frac{2\delta}{\det(A)-4}.
\end{equation*}
Using Lemma \ref{lemma:zeta_inv}, we see that there is at most one eigenvalue of $D_A$.
\item $\alpha\neq 0$ then \eqref{eq:quadratic} is a quadratic equation for $\zeta(z)$. In view of Lemma \ref{lemma:zeta_inv}, we conclude that in this case there are at most two eigenvalues of $D_A$.
\end{enumerate}
\end{enumerate}

\begin{remark}
In the cases 2.d) and 2.e), one can use \eqref{eq:z_sol} to get rather lengthy but still  explicit formulae for the eigenvalues of $D_A$. Consequently, one can calculate weak and strong coupling asymptotic expansions for these eigenvalues. In particular, one sees that the leading term of every eigenvalue of $D_{\kappa A}$ is either $m$ or $-m$ as $\kappa\to 0$.
\end{remark}

According to Theorem \ref{theo:res} and Proposition \ref{prop:PointSpec}, 
\begin{equation} \label{eq:specInclusion}
\sigma(D_A)\subset (-\infty,-|m|]\cup[|m|,+\infty)\cup\sigma_p(D_A),
\end{equation}
where $((-\infty,-|m|]\cup[|m|,+\infty))\cap\sigma_p(D_A)=\emptyset$.
Now, $D_0$ is self-adjoint, $\sigma(D_0)=\sigma_{ess}(D_0)=(-\infty,-|m|]\cup[|m|,+\infty)$, and the second term on the right-hand side of \eqref{eq:resolvent} is a Hilbert Schmidt (and thus compact) operator. If we have either
\begin{enumerate}[label=\alph*)]
\item $\sigma(D_0)\neq \R \wedge \res(D_A)\neq \emptyset$

\noindent or
\item $\res(D_A)\cap\C_\pm\neq\emptyset$
\end{enumerate}
then, by the Weyl essential spectrum theorem \cite[Thm. XIII.14]{RS4}, $\sigma_{ess}(D_A)=\sigma_{ess}(D_0)$. Recall that, in this result, $\sigma_{ess}(D_A):=\sigma(D_A)\setminus\sigma_d(D_A)$, where the discrete spectrum $\sigma_d$ is defined  as the set of all isolated (in the spectrum) eigenvalues with finite algebraic multiplicity, cf. \cite[Sect. XII.2]{RS4}.
From the analysis of $\sigma_p(D_A)$ above, we see that either a) or b) holds true except for the cases 1.a), 1.c), and 2.a). In the case 1.a) and 2.a) we get $\sigma(D_A)=\C$, because the spectrum is always a closed set. By the same argument and taking \eqref{eq:specInclusion} into account we get either $\sigma(D_A)=\overline{\C_+}$ or $\sigma(D_A)=\overline{\C_-}$ in the case 1.c). In all cases 1.a)--2.e), there is equality in \eqref{eq:specInclusion}, i.e., Theorem \ref{theo:res} provides the resolvent formula for all $z\in\res(D_A)$.

We summarize our findings in the following theorem that extends Proposition \ref{prop:PointSpec}.

\begin{theorem} \label{theo:spec}
We have
\begin{equation*}
\sigma(D_A)=(-\infty,-|m|]\cup[|m|,+\infty)\cup\sigma_p(D_A),
\end{equation*}
where no points of $\sigma_p(D_A)$ are contained in $(-\infty,-|m|]\cup[|m|,+\infty)$ and $z\in\C\setminus((-\infty,-|m|]\cup[|m|,+\infty))$ is an eigenvalue of $D_A$ if and only if 
\begin{equation} \label{eq:ev_DA}
\det\Big(\sigma_0+\frac{i}{2}AZ(z)\Big)=0.
\end{equation}
In particular, if
\begin{itemize}[leftmargin=2em]
 \item $m=0,\, \tr(A)=0,$ and $\det(A)=4$ then $\sigma_p(D_A)=\C\setminus\R$.
 \item $m=0,\, \tr(A)\neq 0,$ and  $\det(A)-4=\pm 2i\tr(A)$ then $\sigma_p(D_A)=\C_\pm$.
 \item $m\neq 0$, $A$ has zero diagonal, and $\det(A)=4$ then $\sigma_p(D_A)=\C\setminus((-\infty,-|m|]\cup[|m|,+\infty))$.
\end{itemize}
In all other cases there are at most two eigenvalues of $D_A$. Especially, under conditions 1.b), 1.d), 2.b), and 2.c) (with appropriate restrictions on values of $m$), $D_A$ has no eigenvalues. In all cases, the geometric multiplicity of the eigenvalue $z$ equals $\dim (\ker(\sigma_0+\frac{i}{2}AZ(z)))\leq 2$, the algebraic multiplicity of an isolated eigenvalue is finite, and, for all $z\notin\sigma(D_A)$, the resolvent $(D_A-z)^{-1}$ is given by \eqref{eq:resolvent}.
\end{theorem}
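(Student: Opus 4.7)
The plan is to assemble the statement from three ingredients already at hand: the resolvent formula of Theorem \ref{theo:res}, the point-spectrum characterization of Proposition \ref{prop:PointSpec}, and the Weyl essential spectrum theorem, supplemented by the inversion Lemma \ref{lemma:zeta_inv} to enumerate the solutions of the eigenvalue equation in each sub-case.

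First I would observe that Theorem \ref{theo:res} immediately yields the inclusion
\begin{equation*}
\sigma(D_A)\subset (-\infty,-|m|]\cup[|m|,+\infty)\cup\sigma_p(D_A),
\end{equation*}
since whenever $z\notin(-\infty,-|m|]\cup[|m|,+\infty)$ satisfies $\det(\sigma_0+\tfrac{i}{2}AZ(z))\neq 0$, formula \eqref{eq:resolvent} produces a bounded inverse of $(D_A-z)$, and the set of excluded $z$ is precisely $\sigma_p(D_A)$ by Proposition \ref{prop:PointSpec}. Next I would expand $\det(\sigma_0+\tfrac{i}{2}AZ(z))=0$ into \eqref{eq:EVeq} and, for $m\neq 0$, into the quadratic \eqref{eq:quadratic} in $\zeta(z)$. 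A case distinction on the diagonal entries $(\alpha,\delta)$ of $A$ and on whether $\det(A)=4$, followed by Lemma \ref{lemma:zeta_inv}, then reproduces the listed alternatives for $\sigma_p(D_A)$: either empty, a single point, at most two points, or one of the ``large'' sets appearing in cases 1.a), 1.c), 2.a).

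To control the essential part, I would use that $R_z^A-R_z$ in \eqref{eq:resolvent} is a rank-at-most-two integral operator, hence Hilbert--Schmidt and a fortiori compact. In each non-degenerate case $\sigma_p(D_A)$ has at most two points, so there exists $z\in\res(D_A)\cap\C_+$; the Weyl essential spectrum theorem \cite[Thm.~XIII.14]{RS4} then gives $\sigma_{ess}(D_A)=\sigma_{ess}(D_0)=(-\infty,-|m|]\cup[|m|,+\infty)$, which together with the above inclusion and Proposition \ref{prop:PointSpec} yields the first displayed identity, the discreteness of every eigenvalue outside $(-\infty,-|m|]\cup[|m|,+\infty)$, and the finiteness of its algebraic multiplicity. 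The bound on geometric multiplicity is already part of Proposition \ref{prop:PointSpec}.

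The delicate step will be the strongly non-self-adjoint cases 1.a), 1.c), 2.a), where $\sigma_p(D_A)$ is dense in $\C$ or in an open half-plane and Weyl's theorem does not apply directly. Here I would argue by closedness of the spectrum: in cases 1.a) and 2.a) density forces $\sigma(D_A)=\C$, while in case 1.c) closedness gives $\sigma(D_A)\supset\overline{\C_\pm}$. The matching upper bound $\sigma(D_A)\subset\overline{\C_\pm}$ in case 1.c) comes from checking directly in \eqref{eq:EVeq}, with $m=0$ and $\zeta(z)=\sgn(\Im z)$, that $\det(\sigma_0+\tfrac{i}{2}AZ(z))\neq 0$ on the opposite open half-plane, so that \eqref{eq:resolvent} applies there. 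In every case the validity of \eqref{eq:resolvent} on the entire resolvent set is inherited from Theorem \ref{theo:res}.
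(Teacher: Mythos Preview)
Your proposal is correct and follows essentially the same approach as the paper: the inclusion from Theorem~\ref{theo:res}, the case analysis of \eqref{eq:EVeq}/\eqref{eq:quadratic} via Lemma~\ref{lemma:zeta_inv}, compactness of the resolvent difference combined with the Weyl theorem, and the closedness argument in the strongly non-self-adjoint cases are exactly the ingredients the paper uses. One small refinement: when $m=0$ the set $\res(D_0)=\C_+\cup\C_-$ has two connected components, so to invoke \cite[Thm.~XIII.14]{RS4} you should note (as the paper does) that in the non-degenerate cases $\res(D_A)$ meets \emph{both} half-planes, not just $\C_+$; this is of course immediate since $\sigma_p(D_A)$ then contains at most two points.
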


\begin{corollary} \label{cor:spec}
Let $m\neq 0$ and $z$ be an eigenvalue of $D_A$. Then $z$ is simple except for the choice 
\begin{equation} \label{eq:A_deg}
A=\begin{pmatrix}
\alpha & 0\\
0 & -\frac{4}{\alpha}
\end{pmatrix},\quad \alpha\in\C\setminus\{0\},
\end{equation} 
(that falls under the decoupled case (iii) described below the proof of Proposition \ref{prop:decoup}) in which case the geometric multiplicity of $z$ equals $2$ and there are no other eigenvalues of $D_A$.  
\end{corollary}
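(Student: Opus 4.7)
The plan is to use Theorem~\ref{theo:spec}, which identifies the geometric multiplicity of an eigenvalue $z$ of $D_A$ with $\dim \ker M(z)$, where $M(z):=\sigma_0+\frac{i}{2}AZ(z)$. Since $M(z)$ is a $2\times 2$ matrix, its kernel has dimension $2$ exactly when $M(z)$ vanishes identically. The whole claim therefore reduces to characterizing the matrices $A$ for which $M(z)=0$ is solvable in the gap, and to verifying uniqueness of the corresponding $z$.

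For the forward direction, suppose $z$ has geometric multiplicity $2$. Then $M(z)=0$ forces $AZ(z)=2i\sigma_0$. Since $m\neq 0$ and since $z=\pm m$ cannot be an eigenvalue, $Z(z)=\mathrm{diag}(\zeta(z),\zeta(z)^{-1})$ is invertible, so
\[
A=2iZ(z)^{-1}=\mathrm{diag}\bigl(2i/\zeta(z),\,2i\zeta(z)\bigr).
\]
In particular $A$ is diagonal and the product of its diagonal entries equals $(2i)^{2}=-4$. Setting $\alpha:=2i/\zeta(z)$ identifies $A$ with the form~\eqref{eq:A_deg}. One then checks $\det A=-4$ and $\beta+\gamma=0$, so by Proposition~\ref{prop:decoup} the operator $D_A$ falls in the decoupled case~(iii), as asserted.

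For the converse, I would substitute $A=\mathrm{diag}(\alpha,-4/\alpha)$ into $M(z)$. The result is diagonal with entries $1+i\alpha\zeta(z)/2$ and $1-2i/(\alpha\zeta(z))$, and a short rearrangement yields
\[
\det M(z)=\frac{i}{2\alpha\zeta(z)}\bigl(\alpha\zeta(z)-2i\bigr)^{2}.
\]
Hence $\det M(z)=0$ is equivalent to the single condition $\zeta(z)=2i/\alpha$, which by Lemma~\ref{lemma:zeta_inv} determines at most one $z$ in the gap. Consequently $D_A$ admits no further eigenvalues; and at this unique $z$ both diagonal entries of $M(z)$ vanish, so $M(z)=0$ and the geometric multiplicity equals $2$.

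The proof is essentially a linear-algebra computation combined with one application of Lemma~\ref{lemma:zeta_inv}. The only step that requires care is the perfect-square factorization of $\det M(z)$, which is the algebraic manifestation of the fact that the two diagonal entries of $M(z)$ share the common root $\zeta(z)=2i/\alpha$. I do not anticipate any substantive obstacle.
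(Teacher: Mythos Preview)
Your proof is correct and follows essentially the same route as the paper: both directions hinge on the observation that $M(z)=0$ forces $A=2iZ(z)^{-1}$, and the converse reduces to the fact that the eigenvalue equation has $\zeta(z)=2i/\alpha$ as a double root, after which Lemma~\ref{lemma:zeta_inv} yields uniqueness. The paper phrases the converse via the quadratic~\eqref{eq:quadratic} rather than by explicitly factoring $\det M(z)$, but this is the same computation.
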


\begin{proof}
The geometric multiplicity of $z$ in the spectrum of $D_A$ equals two if and only if $\dim (\ker(\sigma_0+\frac{i}{2}AZ(z)))=2$, i.e.,  $AZ(z)=2i\sigma_0$. This is equivalent to
\begin{equation} \label{eq:A_cond} 
\beta=\gamma=0 \quad\wedge\quad \zeta(z)\alpha=\zeta(z)^{-1}\delta=2i.
\end{equation}
Consequently, $\alpha\delta=\det(A)=-4$, $A$ is of the form \eqref{eq:A_deg}, and we get \eqref{eq:decoup_cond}, i.e., we are necessarily in the decoupled case (iii). 

Vice versa,  let $A$ be as in \eqref{eq:A_deg} and $z$ be arbitrary for now. Then $\zeta(z)=2i/\alpha$  is a double root of \eqref{eq:quadratic} and \eqref{eq:A_cond} is satisfied. Putting this together with  Lemma \ref{lemma:zeta_inv}, we conclude that there is at most one eigenvalue of $D_A$ and its geometric multiplicity equals $2$.
\end{proof}

We see that the spectrum of $D_A$ may change abruptly with an infinitesimal change of the matrix $A$. For example, put $m=0$ and
$$A=\begin{pmatrix}
 i\kappa & 2+\varepsilon\\
 -2 & 0
\end{pmatrix},
$$
where $\kappa,\,\varepsilon\in\C$. Then we get
$$\sigma_p(D_A)=
\begin{cases}
\C\setminus\R & \text{if }\kappa=\varepsilon=0,\\
\emptyset & \text{if }\kappa=0 \wedge \varepsilon\neq 0\text{ or }0\neq \kappa\neq\pm\varepsilon,\\
\C_\pm & \text{if }\varepsilon=\mp\kappa\neq 0.\\
\end{cases}
$$
Another interesting spectral transition occurs when $m\neq 0$ and 
$$A=\begin{pmatrix}
 0 & 4\kappa\\
 -\kappa^{-1} & \delta
\end{pmatrix}
$$
with $\kappa\in\C\setminus 0$ and $\delta\in\C$, because then
$$\sigma_p(D_A)=
\begin{cases}
\C\setminus((-\infty,-|m|]\cup[|m|,+\infty)) & \text{if }\delta=0,\\
\emptyset & \text{if }\delta\neq 0.
\end{cases}
$$

\section{Approximations by regular non-local potentials} \label{sec:approx}
We start this section with introducing a useful convention. Let $\langle\cdot,\cdot\rangle$ stands for the inner product on $L^2(\R;\C)$. Then with some abuse of notation we define
$$\langle f|\psi\rangle = \begin{pmatrix}
\langle f|\psi_1\rangle\\
\langle f|\psi_2\rangle
\end{pmatrix}, \quad
\langle f|B\rangle =\begin{pmatrix}
\langle f|B_{11}\rangle & \langle f|B_{12}\rangle\\
\langle f|B_{21}\rangle & \langle f|B_{22}\rangle
\end{pmatrix},$$
for all
$$f\in L^{2}(\R;\C),\quad \psi\equiv\begin{pmatrix} \psi_1 \\ \psi_2\end{pmatrix}\in L^{2}(\R;\C^2),\quad B\equiv\begin{pmatrix}B_{11} & B_{12}\\ B_{21} & B_{22}\end{pmatrix}\in L^{2}(\R;\C^{2,2}).$$

Now, we will construct a family of approximating operators for $D_A$. For $\varepsilon>0$ and $v\in L^1(\R;\R)\cap L^2(\R;\R)$ such that $\int_\R v=1$, we  put $v_\varepsilon(x):=\varepsilon^{-1}v(\varepsilon^{-1}x)$ and
$$W_\varepsilon:=|v_\varepsilon\rangle\langle v_\varepsilon|,$$
i.e., $W_\varepsilon \psi=\langle v_\varepsilon,\psi\rangle v_\varepsilon$, where $\psi$ may be scalar, vector, or even matrix valued $L^2$-function. Note that 
\begin{equation*}
\int_\R v_\varepsilon(x)\dd x=1
\end{equation*}
and $v_\varepsilon$ converges to $\delta_0$  in the sense of distributions as $\varepsilon\to 0$.  Therefore, we have
\begin{equation} \label{eq:formal_lim}
\lim_{\varepsilon\to 0}\langle W_\varepsilon f, g\rangle=\lim_{\varepsilon\to 0}\langle f,v_\varepsilon\rangle \langle v_\varepsilon,g\rangle=f(0) g(0)=((\delta_0,f)\delta_0,g),
\end{equation}
for all $f,g\in C_0^\infty(\R;\R)$. Since the right-hand side of \eqref{eq:formal_lim} may be formally written as $\langle|\delta_0\rangle\langle\delta_0|f,g\rangle$, this suggests that
\begin{equation*}
\begin{split}
&\dom(D_A^\varepsilon):=\dom(D_0)=H^1(\R;\C^2)\\
&D_A^\varepsilon=D_0+A\otimes W_\varepsilon
\end{split}
\end{equation*}
may be a good candidate for the sought approximation.

\subsection{Norm resolvent convergence}

\begin{theorem} \label{theo:approx}
Let $z\in\res(D_A)$. Then for all $\varepsilon>0$ small enough, $z\in\res(D_A^\varepsilon)$ and 
\begin{equation*}
\lim_{\varepsilon\to 0}\|(D_A^\varepsilon-z)^{-1}-(D_A-z)^{-1}\|=0.
\end{equation*}
\end{theorem}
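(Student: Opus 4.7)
The plan is to exploit the rank-$\leq 2$ structure of $A\otimes W_\varepsilon$ via a Krein--Woodbury resolvent identity, and to match the resulting finite-rank correction against the explicit formula \eqref{eq:resolvent} from Theorem \ref{theo:res}. Factorize $A\otimes W_\varepsilon=P_\varepsilon A\,Q_\varepsilon$, with $P_\varepsilon:\C^2\to L^2(\R;\C^2)$ sending $\xi\mapsto v_\varepsilon\xi$ and $Q_\varepsilon:=P_\varepsilon^*:\psi\mapsto\langle v_\varepsilon,\psi\rangle$. Provided $\sigma_0+T_\varepsilon(z)A$ is invertible, the Woodbury identity yields
\begin{equation*}
(D_A^\varepsilon-z)^{-1}=(D_0-z)^{-1}-(D_0-z)^{-1}P_\varepsilon A\bigl(\sigma_0+T_\varepsilon(z)A\bigr)^{-1}Q_\varepsilon(D_0-z)^{-1},
\end{equation*}
where $T_\varepsilon(z):=Q_\varepsilon(D_0-z)^{-1}P_\varepsilon=\int_\R\int_\R v_\varepsilon(x)R_z(x,y)v_\varepsilon(y)\,\dd x\,\dd y\in\C^{2,2}$.

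First I would establish the matrix convergence $T_\varepsilon(z)\to\tfrac{i}{2}Z(z)$. Splitting $R_z(x,y)$ according to \eqref{eq:freeResolvent}, the $\sgn(x-y)\sigma_1$-term is antisymmetric under $x\leftrightarrow y$ and vanishes against the symmetric measure $v_\varepsilon(x)v_\varepsilon(y)\,\dd x\,\dd y$, while the remaining part $\tfrac{i}{2}Z(z)\int\int v_\varepsilon(x)v_\varepsilon(y)\ee^{ik(z)|x-y|}\,\dd x\,\dd y$ tends to $\tfrac{i}{2}Z(z)$ because the exponential tends uniformly to $1$ on the shrinking support of $v_\varepsilon(x)v_\varepsilon(y)$. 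The hypothesis $z\in\res(D_A)$ and Theorem \ref{theo:res} give invertibility of $\sigma_0+\tfrac{i}{2}AZ(z)$; the elementary identity $\det(\sigma_0+AB)=\det(\sigma_0+BA)$ transports this to $\sigma_0+\tfrac{i}{2}Z(z)A$, and continuity of matrix inversion then yields invertibility of $\sigma_0+T_\varepsilon(z)A$ for all small $\varepsilon$ together with norm convergence of $(\sigma_0+T_\varepsilon(z)A)^{-1}$.

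It remains to prove norm convergence of the two outer factors. Translation invariance of $\mathscr{D}$ gives $R_z(x,y)=\tilde R_z(x-y)$ with $\tilde R_z\in L^1(\R;\C^{2,2})\cap L^2(\R;\C^{2,2})$, the integrability following from $\Im k(z)>0$, valid for any $z\in\res(D_A)\subset\res(D_0)$. The operator $(D_0-z)^{-1}P_\varepsilon$ sends $\xi$ to $(\tilde R_z*v_\varepsilon)\xi$, so its distance from $\xi\mapsto R_z(\cdot,0)\xi$ is controlled by $\|\tilde R_z*v_\varepsilon-\tilde R_z\|_{L^2}$, which vanishes by the classical approximate-identity lemma. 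Passage to adjoints, using self-adjointness of $D_0$, furnishes the analogous norm convergence of $Q_\varepsilon(D_0-z)^{-1}$ to $\psi\mapsto\int R_z(0,y)\psi(y)\,\dd y$. Multiplying the three norm-convergent factors and using the algebraic identity $A(\sigma_0+\tfrac{i}{2}Z(z)A)^{-1}=(\sigma_0+\tfrac{i}{2}AZ(z))^{-1}A$ reproduces exactly \eqref{eq:resolvent}, giving both $z\in\res(D_A^\varepsilon)$ and the claimed norm resolvent convergence.

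The only serious technical point is this $L^2$-mollifier step: $R_z(\cdot,0)$ has a jump discontinuity at the origin stemming from the $\sgn$-term of \eqref{eq:freeResolvent}, so uniform or pointwise convergence is unavailable, but the classical statement $f*v_\varepsilon\to f$ in $L^2(\R)$ for $f\in L^2$ and approximate identities $v_\varepsilon$ is precisely what is needed, and holds under the standing hypothesis $v\in L^1(\R)\cap L^2(\R)$ with $\int_\R v=1$. Apart from this, the argument is bookkeeping for rank-$2$ perturbations and matrix algebra in $\C^{2,2}$.
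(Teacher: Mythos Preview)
Your argument is correct and, at the strategic level, parallels the paper's: both derive the same Krein--Woodbury resolvent identity for $D_A^\varepsilon$ (your $A(\sigma_0+T_\varepsilon(z)A)^{-1}$ is the paper's $(\sigma_0+AT_\varepsilon(z))^{-1}A$ via the identity you quote), and both show convergence of the central $2\times2$ matrix in the same way. The genuine difference lies in the treatment of the outer factors. The paper bounds the Hilbert--Schmidt norm of the kernel difference directly, splitting it into three pieces $I_1,I_2,I_3$ and controlling them through two dedicated lemmas (a uniform $L^2$-bound on $y\mapsto R_z(\cdot,y)$ and a Minkowski/dominated-convergence estimate for $\int\bigl(\int|R_z(x,\varepsilon s)-R_z(x,0)||v(s)|\,\dd s\bigr)^2\dd x$). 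You instead exploit the translation invariance $R_z(x,y)=\tilde R_z(x-y)$ to recognise $(D_0-z)^{-1}P_\varepsilon$ as convolution with an approximate identity and invoke the classical $L^2$-mollifier lemma $\tilde R_z*v_\varepsilon\to\tilde R_z$, then pass to adjoints (using $\bar z\in\res(D_0)$) for the other factor. This is shorter and replaces the paper's two auxiliary lemmas by a single textbook fact; the paper's route, on the other hand, is more self-contained and makes the Hilbert--Schmidt nature of the correction explicit.

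Two small points to tighten. First, your phrase ``the exponential tends uniformly to $1$ on the shrinking support of $v_\varepsilon(x)v_\varepsilon(y)$'' is not quite right, since $v$ is only assumed to lie in $L^1\cap L^2$ and need not have compact support; the correct justification is the substitution $x\mapsto\varepsilon x$, $y\mapsto\varepsilon y$ followed by dominated convergence, exactly as the paper does. Second, the implication ``$z\in\res(D_A)\Rightarrow\det(\sigma_0+\tfrac{i}{2}AZ(z))\neq0$'' is the content of Theorem~\ref{theo:spec} (or Proposition~\ref{prop:PointSpec}) rather than Theorem~\ref{theo:res}, which states only the converse.
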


\begin{remark}
For $A$ and $m$ such that $\det(A)=4$ and either $m=0$ and $\tr(A)=0$ or $m\neq 0$ and $A$ has zero diagonal, $\res(D_A)=\emptyset$, see Theorem \ref{theo:spec}. Therefore, in these exceptional cases it does not make sense to speak about approximations of the resolvent.
\end{remark}

The rest of the section is devoted to the proof of Theorem \ref{theo:approx}. We start with several auxiliary results.

\begin{lemma} \label{lemm:lim_inv_matrix}
Let $z\in\res(D_A)$. Then for all $\varepsilon\in(0,\varepsilon_z)$, where $\varepsilon_z>0$ is a $z$-dependent constant, the matrix $(\sigma_0+\langle v_\varepsilon, A R_z v_\varepsilon \rangle)$ is invertible and
\begin{equation*} 
\lim_{\varepsilon\to 0}(\sigma_0+\langle v_\varepsilon, A R_z v_\varepsilon \rangle)^{-1}=\Big(\sigma_0+\frac{i}{2}AZ(z)\Big)^{-1}.
\end{equation*}
\end{lemma}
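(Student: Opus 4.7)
The plan is to compute the limit of $\langle v_\varepsilon, A R_z v_\varepsilon\rangle$ directly from the explicit kernel \eqref{eq:freeResolvent} and then invoke continuity of matrix inversion. First, I would pull the constant matrix $A$ out and write
\begin{equation*}
\langle v_\varepsilon, A R_z v_\varepsilon\rangle = A \int_\R\!\int_\R v_\varepsilon(x) v_\varepsilon(y) R_z(x,y)\,\dd x\,\dd y.
\end{equation*}
After the rescaling $x=\varepsilon u$, $y=\varepsilon s$ (which reduces $v_\varepsilon(x)v_\varepsilon(y)\,\dd x\,\dd y$ to $v(u)v(s)\,\dd u\,\dd s$), this becomes
\begin{equation*}
\frac{i}{2} A \int_\R\!\int_\R v(u) v(s) \bigl(Z(z) + \sgn(u-s)\sigma_1\bigr)\ee^{i k(z)\varepsilon|u-s|}\,\dd u\,\dd s.
\end{equation*}

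The second step is a dominated convergence argument. Since $z\notin(-\infty,-|m|]\cup[|m|,+\infty)$, the convention $\Im\sqrt{w}>0$ forces $\Im k(z) > 0$, so $|\ee^{ik(z)\varepsilon|u-s|}|=\ee^{-\Im k(z)\,\varepsilon|u-s|}\leq 1$ uniformly in $\varepsilon$. Combined with $v\otimes v\in L^1(\R^2)$, this supplies an $\varepsilon$-independent dominating function and lets me pass to the pointwise limit $\ee^{ik(z)\varepsilon|u-s|}\to 1$. The $Z(z)$-contribution then reduces to $Z(z)\bigl(\int_\R v\bigr)^2=Z(z)$ by the normalisation of $v$, while the $\sgn$-contribution vanishes because the integrand $v(u) v(s) \sgn(u-s)$ is antisymmetric under the exchange $u\leftrightarrow s$; crucially, this requires \emph{no} symmetry assumption on $v$. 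Hence $\langle v_\varepsilon, A R_z v_\varepsilon\rangle \to \tfrac{i}{2}AZ(z)$.

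Finally, since $z\in\res(D_A)$, Theorem \ref{theo:spec} yields $\det(\sigma_0+\tfrac{i}{2}AZ(z))\neq 0$, so the determinant of $\sigma_0 + \langle v_\varepsilon, A R_z v_\varepsilon\rangle$ remains bounded away from zero for all sufficiently small $\varepsilon$. As matrix inversion is continuous wherever the determinant does not vanish, the asserted limit of the inverses follows at once.

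The main conceptual obstacle would normally be the jump of $R_z(x,y)$ across the diagonal $x=y$, which (if handled naively) appears to force a symmetry hypothesis on $v$; the antisymmetry identity $\int_\R\!\int_\R v(u)v(s)\sgn(u-s)\,\dd u\,\dd s=0$ sidesteps this and is really the only non-routine step in the argument.
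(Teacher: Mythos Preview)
Your argument is correct and matches the paper's proof essentially step for step: both compute $\langle v_\varepsilon, R_z v_\varepsilon\rangle$ from the explicit kernel, kill the $\sgn$-term by antisymmetry under $u\leftrightarrow s$, and handle the $Z(z)$-term by dominated convergence together with $\int_\R v=1$, then conclude via continuity of matrix inversion and the invertibility of $\sigma_0+\tfrac{i}{2}AZ(z)$ on $\res(D_A)$. The only cosmetic difference is that the paper names the two scalar integrals $\alpha_\varepsilon,\beta_\varepsilon$ and observes $\beta_\varepsilon=0$ for every $\varepsilon$ (not just in the limit), which it later reuses in Proposition~\ref{prop:ev}.
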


\begin{proof}
By Theorem \ref{theo:spec}, $(\sigma_0+\frac{i}{2}AZ(z))$ is invertible for every  $z\in\res(D_A)$. Once we show that 
\begin{equation} \label{eq:mat_lim}
\lim_{\varepsilon\to 0}\langle v_\varepsilon, A R_z v_\varepsilon \rangle=\frac{i}{2}AZ(z),
\end{equation}
the lemma follows.
% from the stability of invertibility, cf. \cite[Thm. IV.1.16]{kato}.
Using the explicit formula for $R_z$ we obtain
\begin{equation} \label{eq:product}
\langle v_\varepsilon, A R_z v_\varepsilon \rangle=A\langle v_\varepsilon,  R_z v_\varepsilon \rangle=\frac{i}{2}A 
\begin{pmatrix}
\zeta(z) \alpha_\varepsilon & \beta_\varepsilon\\
\beta_\varepsilon & \zeta(z)^{-1}\alpha_\varepsilon
\end{pmatrix},
\end{equation}
where
\begin{equation*}
\alpha_{\varepsilon}:=\int_{\R^{2}}v_{\varepsilon}(x)\ee^{ik(z)|x-y|}v_{\varepsilon}(y)\dd x\dd y,\,\,
\beta_{\varepsilon}:=\int_{\R^{2}}\sgn(x-y)v_{\varepsilon}(x)\ee^{ik(z)|x-y|}v_{\varepsilon}(y)\dd x\dd y.
\end{equation*}
We observe that $\beta_\varepsilon=0$ because of the antisymmetry of integrand and that
\begin{equation} \label{eq:alpha_lim}
\lim_{\varepsilon\to 0}\alpha_\varepsilon=\lim_{\varepsilon\to 0}\int_{\R^2} v(x)\ee^{i \varepsilon k(z)|x-y|}v(y)\dd x \dd y=1,
\end{equation}
due to the dominated convergence theorem and the fact that $\int_\R v=1$. Hence, we get \eqref{eq:mat_lim}.
\end{proof}

\begin{proposition} \label{prop:approx_res}
Let $z\in\C\setminus((-\infty,-|m|]\cup[|m|,+\infty))$ be such that $\det(\sigma_0+\langle v_{\varepsilon},A R_{z}v_{\varepsilon}\rangle)\neq 0$. Then  $z\in\res(D_A^\varepsilon)$ and $(D_A^\varepsilon-z)^{-1}$ is the integral operator with the kernel
\begin{equation} \label{eq:res_kernel_approx}
R_z^{A,\varepsilon}(x,y):=R_{z}(x,y)-\int_{\mathbb{R}^{2}}R_z (x,\varepsilon s)v(s)(\sigma_0+\langle v_{\varepsilon},A R_{z}v_{\varepsilon}\rangle)^{-1} A v(t)R_z(\varepsilon t, y)\dd s \dd t.
\end{equation}
\end{proposition}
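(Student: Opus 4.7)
The natural strategy is a Krein/Sherman--Morrison--Woodbury argument for the rank-(at most)-$2$ additive perturbation $V := A\otimes W_\varepsilon$. To exploit the finite-rank structure, I would factor $V = BC$ with $B\colon\C^2\to L^2(\R;\C^2)$ given by $(B\xi)(x):=v_\varepsilon(x)\xi$, and $C\colon L^2(\R;\C^2)\to\C^2$ given by $C\psi:=A\langle v_\varepsilon,\psi\rangle$. Using the notational convention fixed at the beginning of the section one computes $CR_zB = A\langle v_\varepsilon, R_zv_\varepsilon\rangle = \langle v_\varepsilon, AR_zv_\varepsilon\rangle$, which is precisely the $2\times 2$ matrix appearing in the statement.

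Fix $\psi\in L^2(\R;\C^2)$. Any $\phi\in\dom(D_A^\varepsilon)=H^1(\R;\C^2)$ solving $(D_A^\varepsilon-z)\phi=\psi$ must satisfy $\phi = R_z\psi - R_zV\phi = R_z\psi - R_zBc$ with $c:=C\phi\in\C^2$; applying $C$ to both sides yields the matrix equation $(\sigma_0 + \langle v_\varepsilon, AR_zv_\varepsilon\rangle)c = CR_z\psi$, which by the invertibility hypothesis determines $c$, and hence $\phi$, uniquely. Conversely, defining $\phi$ via this recipe one has $\phi = R_z(\psi - Bc)\in H^1(\R;\C^2)$ and
\[(D_A^\varepsilon - z)\phi = (D_0 - z)\phi + V\phi = (\psi - Bc) + BC\phi = (\psi - Bc) + Bc = \psi,\]
where the last step uses the defining relation for $c$ to verify $C\phi = c$. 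Applied with $\psi = 0$, the same reasoning forces $c = 0$ and $\phi = 0$, so $D_A^\varepsilon - z$ is a bijection $H^1(\R;\C^2)\to L^2(\R;\C^2)$. Since $D_A^\varepsilon$ is closed (bounded perturbation of the self-adjoint $D_0$), the closed graph theorem gives $z\in\res(D_A^\varepsilon)$ and $(D_A^\varepsilon-z)^{-1}\psi = \phi$.

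To read off the integral kernel, I would rewrite
\[\phi(x) = (R_z\psi)(x) - (R_zB)(x)\,(\sigma_0 + \langle v_\varepsilon, AR_zv_\varepsilon\rangle)^{-1}A\,\langle v_\varepsilon, R_z\psi\rangle\]
in terms of the integral kernels of $R_z$ and $B$, interchange the order of integration, and rescale by $y=\varepsilon s$, $y'=\varepsilon t$ (under which $v_\varepsilon(y)\dd y = v(s)\dd s$); this produces exactly \eqref{eq:res_kernel_approx}. I do not anticipate any serious obstacle beyond this bookkeeping: once the factorisation $V = BC$ is fixed the whole argument is a routine Woodbury calculation, and the only genuine care required is tracking the matrix-versus-scalar convention underlying expressions such as $A\langle v_\varepsilon,\cdot\rangle$ and $\langle v_\varepsilon, AR_zv_\varepsilon\rangle$.
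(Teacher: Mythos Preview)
Your argument is correct and is essentially the same Sherman--Morrison/Woodbury reduction that the paper uses: the paper writes $(D_A^\varepsilon-z)^{-1}=R_z(I+A\otimes W_\varepsilon R_z)^{-1}$ and inverts the bracket by applying $\langle v_\varepsilon, AR_z\,\cdot\,\rangle$ to the equation $(I+A\otimes W_\varepsilon R_z)\psi=g$, arriving at the identical $2\times 2$ system $(\sigma_0+\langle v_\varepsilon, AR_zv_\varepsilon\rangle)c=\langle v_\varepsilon, AR_zg\rangle$. Your explicit factorisation $V=BC$ and the closedness/bijectivity bookkeeping are cosmetic refinements of the same idea.
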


\begin{proof}
For $z\in\C\setminus((-\infty,-|m|]\cup[|m|,+\infty))=\res(D_0)$, we have
$$(D_A^\varepsilon-z)^{-1}=R_{z}(I+A\otimes W_\varepsilon R_{z})^{-1},$$
whenever the inverse on the right-hand side exists. To find this inverse, take any $g\in L^2(\R;\C^2)$ and consider the equation
\begin{equation} \label{eq:invertingRF}
(I+A\otimes W_\varepsilon R_{z})\psi=\psi+\langle v_\varepsilon, A R_z\psi\rangle v_\varepsilon=g
\end{equation}
for $\psi\in L^2(\R;\C^2)$. Applying $A R_z$ to the both sides of \eqref{eq:invertingRF} and taking  the inner product of the result with $v_\varepsilon$ we get
$$(\sigma_0+\langle v_\varepsilon, A R_z v_\varepsilon \rangle)\langle v_\varepsilon, A R_z\psi\rangle=\langle v_\varepsilon, A R_z g\rangle.$$
Since the matrix $(\sigma_0+\langle v_\varepsilon, A R_z v_\varepsilon \rangle)$ is assumed to be invertible, we can calculate $\langle v_\varepsilon, A R_z\psi\rangle$ from this equation and substitute for it in \eqref{eq:invertingRF}. This yields
$$\psi=g-(\sigma_0+\langle v_\varepsilon, A R_z v_\varepsilon \rangle)^{-1} \langle v_\varepsilon, A R_z g \rangle v_\varepsilon$$
and, consequently,
\begin{equation} \label{eq:res_approx}
(D_A^\varepsilon-z)^{-1}=R_z-R_z (\sigma_0+\langle v_\varepsilon, A R_z v_\varepsilon \rangle)^{-1}A\otimes W_\varepsilon R_z.
\end{equation}
Using an obvious substitution in the expression for the integral kernel of the right-hand side of \eqref{eq:res_approx} we arrive at \eqref{eq:res_kernel_approx}.
\end{proof}

Below, we will denote  the Frobenius norm of the matrix $B\equiv (B_{ij})_{i,j=1}^{n,m}$ by $|B|$, 
$$|B|^{2}=\sum_{i,j=1}^{n,m}|B_{ij}|^{2}.$$
Recall that the Frobenius norm is sub-multiplicative, i.e., for all $B\in \C^{n,m}$ and $C\in \C^{m,k}$, $|BC|\leq|B| |C|$.

\begin{lemma}\label{lemm:supR_z}
For every $z\in\res(D_0)$, $\sup_{y\in\mathbb{R}}\int_\mathbb{R}|R_z(x,y)|^{2}\dd x < +\infty$.
\end{lemma}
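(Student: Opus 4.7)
The plan is to exploit the explicit formula \eqref{eq:freeResolvent} for $R_z(x,y)$ and observe that, for $z$ in the resolvent set of the free operator, the kernel decays exponentially in $|x-y|$ at a rate independent of $y$, uniformly in $x$.

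First, I would bound the matrix prefactor. By the triangle inequality and sub-multiplicativity of the Frobenius norm,
$$|R_z(x,y)|\leq \frac{1}{2}\big(|Z(z)|+|\sigma_1|\big)\,\bigl|\ee^{ik(z)|x-y|}\bigr|=:C(z)\,\ee^{-\Im k(z)\,|x-y|},$$
where $C(z)$ depends only on $z$ and not on $x,y$ (the factor $\sgn(x-y)$ contributes a bounded scalar).

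Next, I would verify that $\Im k(z)>0$ on $\res(D_0)=\C\setminus((-\infty,-|m|]\cup[|m|,+\infty))$. By our branch convention, $\Im\sqrt{w}>0$ for every $w\in\C\setminus[0,+\infty)$, so it suffices to check that $z^2-m^2\notin[0,+\infty)$ whenever $z\notin(-\infty,-|m|]\cup[|m|,+\infty)$; this is an elementary case distinction (if $z^2-m^2\geq 0$ with $z\in\R$, then $|z|\geq|m|$, while $z\in\C\setminus\R$ forces $z^2\notin[m^2,+\infty)$).

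With these two ingredients in hand, the desired estimate is a one-line integration: for every $y\in\R$,
$$\int_\R |R_z(x,y)|^2\,\dd x\leq C(z)^2\int_\R \ee^{-2\Im k(z)\,|x-y|}\,\dd x=\frac{C(z)^2}{\Im k(z)},$$
which is finite and independent of $y$, proving the claim.

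There is no real obstacle here; the statement is essentially a direct consequence of the exponential decay of $R_z(x,y)$ away from the diagonal. The only point requiring (minimal) attention is pinning down that $\Im k(z)$ is strictly positive throughout $\res(D_0)$, which is what makes the supremum over $y$ finite.
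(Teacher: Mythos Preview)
Your proof is correct and follows essentially the same approach as the paper's: both bound $|R_z(x,y)|$ by a constant times $\ee^{-\Im k(z)|x-y|}$ via the triangle inequality, note that $\Im k(z)>0$ on $\res(D_0)$, and then integrate the exponential to obtain a $y$-independent bound. The only cosmetic difference is that you spell out the verification of $\Im k(z)>0$ and the value of the Gaussian-type integral a bit more explicitly.
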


\begin{proof}
First, note that $\forall z\in\res(D_0)$, $\Im k(z)>0$.
Using the triangle inequality, we get
\begin{multline} \label{eq:res_ker_est}
|R_z(x,y)| =\Big|\frac{i}{2}(Z(z)+\sgn (x-y)\sigma_1 )\ee^{ik(z)|x-y|}\Big|\\
\leq \frac{1}{2}(|Z(z)|+|\sigma_1|)\big|\ee^{ik(z)|x-y|}\big|
=\frac{1}{2}(|Z(z)|+|\sigma_1|)\ee^{-\Im k(z)|x-y|}.
\end{multline}
This implies that
\begin{equation*}
\int_\mathbb{R} |R_z(x,y)|^{2}\dd x \leq C_z \int_\mathbb{R}\ee^{-2\Im k(z)|x-y|}\dd x = C_z \int_\mathbb{R}\ee^{-2\Im k(z)|x|}\dd x,
\end{equation*}
where $C_z>0$ is a $z$-dependent constant.  The final bound is finite and clearly $y$-independent. 
\end{proof}

\begin{lemma} \label{lemm:res_diff}
For every $z\in\res(D_0)$,
$$\lim_{\varepsilon\to 0}\int_\R\left(\int_\R |R_z(x,\varepsilon s) - R_z(x,0)||v(s)|\dd s\right)^{2}\dd x = 0.$$
\end{lemma}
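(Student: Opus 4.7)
The plan is to exploit the translation-invariance of the free resolvent kernel. Observing that $R_z(x,y)$ depends on $x$ and $y$ only through $x-y$, I would write $R_z(x,y)=G_z(x-y)$ with $G_z(t):=\tfrac{i}{2}(Z(z)+\sgn(t)\sigma_1)\ee^{ik(z)|t|}$. Since $z\in\res(D_0)$ implies $\Im k(z)>0$, the pointwise estimate used in the proof of Lemma \ref{lemm:supR_z} yields $|G_z(t)|\leq C_z\ee^{-\Im k(z)|t|}$, so $G_z\in L^2(\R;\C^{2,2})$ in the Frobenius norm.

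The main step is to apply the Cauchy--Schwarz inequality inside the inner integral by splitting $|v(s)|=|v(s)|^{1/2}\cdot|v(s)|^{1/2}$, which gives
\begin{equation*}
\left(\int_\R|R_z(x,\varepsilon s)-R_z(x,0)||v(s)|\dd s\right)^{2}\leq\|v\|_{L^{1}}\int_\R|R_z(x,\varepsilon s)-R_z(x,0)|^{2}|v(s)|\dd s.
\end{equation*}
Integrating over $x\in\R$ and interchanging the order of integration by Fubini's theorem (legitimate because the integrand is non-negative), the whole assertion reduces to
\begin{equation*}
\int_\R|v(s)|\,\|G_z(\cdot-\varepsilon s)-G_z\|_{L^{2}(\R;\C^{2,2})}^{2}\dd s\xrightarrow{\varepsilon\to 0}0.
\end{equation*}

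For this final limit I would invoke the classical continuity of translation in $L^2$: for each fixed $s\in\R$, $\|G_z(\cdot-\varepsilon s)-G_z\|_{L^{2}}\to 0$ as $\varepsilon\to 0$. Together with the uniform-in-$s$ bound $\|G_z(\cdot-\varepsilon s)-G_z\|_{L^{2}}^{2}\leq 4\|G_z\|_{L^{2}}^{2}$ and $v\in L^{1}(\R;\R)$, Lebesgue's dominated convergence theorem delivers the conclusion. There is no real obstacle here: although $G_z$ has a jump at $t=0$ coming from the $\sgn$ factor, $L^2$-continuity of translation holds for arbitrary $L^2$ functions without any regularity hypothesis, so the discontinuity of the kernel on the diagonal is harmless.
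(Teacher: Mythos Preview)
Your argument is correct. Both proofs reduce the claim to showing that $r_\varepsilon(s):=\int_\R|R_z(x,\varepsilon s)-R_z(x,0)|^2\dd x\to 0$ pointwise in $s$ with a uniform $L^1(|v|\dd s)$ majorant, but the routes differ in two places. First, the paper bounds the outer integral by the Minkowski integral inequality, obtaining $\bigl(\int_\R r_\varepsilon(s)^{1/2}|v(s)|\dd s\bigr)^2$, whereas you apply Cauchy--Schwarz with the splitting $|v|=|v|^{1/2}|v|^{1/2}$ to get the (slightly larger) bound $\|v\|_{L^1}\int_\R r_\varepsilon(s)|v(s)|\dd s$; either suffices. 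Second, to prove $r_\varepsilon(s)\to 0$, the paper works directly with the explicit kernel and finds an $\varepsilon$-independent integrable majorant for $|R_z(x,\varepsilon s)-R_z(x,0)|^2$, while you observe the translation structure $R_z(x,\varepsilon s)=G_z(x-\varepsilon s)$ with $G_z\in L^2$ and invoke strong continuity of translation in $L^2$, together with the trivial bound $\|G_z(\cdot-\varepsilon s)-G_z\|_{L^2}^2\le 4\|G_z\|_{L^2}^2$. Your approach is more conceptual and avoids the second dominated-convergence step with an explicit dominating function; the paper's approach is more hands-on but self-contained without appealing to the general translation fact.
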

\begin{proof}
Put $r_\varepsilon(s):=\int_\R|R_z(x,\varepsilon s) - R_z(x,0)|^{2}\dd x$.  Using the Minkowski integral inequality we deduce that
\begin{equation} \label{eq:Minkowski}
\int_\R\left(\int_\R |R_z(x,\varepsilon s) - R_z(x,0)||v(s)|\dd s\right)^{2}\dd x \leq  \left(\int_\R r_\varepsilon(s)^{\frac{1}{2}} |v(s)|\dd s \right)^{2}.
\end{equation}
Since, by the triangle and the Young inequality, 
$$r_\varepsilon(s)\leq \int_\R (|R_z(x,\varepsilon s)|+|R_z(x,0)|)^{2}\dd x\leq 2\int_\R |R_z(x,\varepsilon s)|^2+|R_z(x,0)|^2 \dd x,$$
we infer from Lemma \ref{lemm:supR_z} that $r_\varepsilon$ is bounded uniformly in $\varepsilon\in(0,+\infty)$. 
Consequently, using the dominated convergence theorem, we get
\begin{equation} \label{eq:firstLeb}
\lim_{\varepsilon\to 0}\int_\R r_\varepsilon(s)^{\frac{1}{2}} |v(s)|\dd s=\int_\R \lim_{\varepsilon\to 0}r_\varepsilon(s)^{\frac{1}{2}} |v(s)|\dd s. 
\end{equation}

Let us now look at the limit on the right-hand side of \eqref{eq:firstLeb}. For any $s\in\R$ and a fixed constant $\delta>0$ we will consider only  $\varepsilon\in (0,\delta/|s|)$ (for $s=0$, the right endpoint should be understood as $+\infty$). 
Similarly as in \eqref{eq:res_ker_est}, we deduce that
\begin{equation*} 
|R_z(x,\varepsilon s)|^{2}\leq C_z \ee^{-2\Im k(z)|x-\varepsilon s|}\leq C_z \ee^{-2\Im k(z)(|x|-\delta)}
\end{equation*}
with a $z$-dependent constant $C_z>0$. Hence, we have
\begin{multline*}
|R_z(x,\varepsilon s) - R_z(x,0)|^{2}\\
\leq 2(|R_z(x,\varepsilon s)|^2+|R_z(x,0)|^2 )\leq 2(C_z \ee^{-2\Im k(z)(|x|-\delta)}+|R_z(x,0)|^2 ).
\end{multline*}
This upper bound is  $\varepsilon$-independent and belongs to $L^1(\R;\dd x)$.  By the dominated convergence theorem, we arrive at
\begin{equation*}
\lim_{\varepsilon\to 0}r_\varepsilon(s)=\int_\R \lim_{\varepsilon\to 0}|R_z(x,\epsilon s) - R_z(x,0)|^{2}\dd x=0.
\end{equation*}
Putting this together with \eqref{eq:Minkowski} and \eqref{eq:firstLeb} we obtain the assertion of the lemma.
\end{proof}

\noindent\textit{Proof of Theorem \ref{theo:approx}.}
The difference $K_z^\varepsilon:=((D_A-z)^{-1}-(D_A^\varepsilon-z)^{-1})$ is the integral operator with the kernel 
\begin{equation*}
K_z^\varepsilon(x,y):=\int_{\R^{2}}R_z (x,\varepsilon s)v(s)T_\varepsilon(z) A v(t)R_z(\varepsilon t, y)\dd s \dd t-R_z(x,0)T(z)A R_z(0,y),
\end{equation*}
where 
\begin{equation*}
T_\varepsilon(z):=(\sigma_0+\langle v_{\varepsilon},A R_{z}v_{\varepsilon}\rangle)^{-1}, \quad  T(z):=\Big(\sigma_0+\frac{i}{2}AZ(z)\Big)^{-1}.
\end{equation*}
We will show that the Hilbert-Schmidt norm of $K_z^\varepsilon$, that cannot be smaller than the operator norm of $K_z^\varepsilon$, tends to zero as $\varepsilon\to 0$. 

Since $\int_R v=1$, we may write
\begin{multline*}
\int_{\R^2}|K_z^\varepsilon(x,y)|^2\dd x\dd y
=\int_{\R^2}\Big|\int_{\R^{2}}\big(R_z (x,\varepsilon s)v(s)T_\varepsilon(z) A v(t)R_z(\varepsilon t, y)\\
-R_z(x,0)v(s)T(z)Av(t) R_z(0,y)\big)\dd s\dd t\Big|^2\dd x\dd y.
\end{multline*}
Using the triangle inequality followed by the inequality $(a+b+c)^2\leq 3(a^2+b^2+c^2)\, (\forall a,b,c\in\R)$ we get
\begin{multline*}
\int_{\R^2}|K_z^\varepsilon(x,y)|^2\dd x\dd y\leq \int_{\R^2}\Big(\int_{\R^{2}}\big|R_z (x,\varepsilon s)v(s)T_\varepsilon(z) A v(t)R_z(\varepsilon t, y)\\
-R_z(x,0)v(s)T(z)Av(t) R_z(0,y)\big|\dd s\dd t\Big)^2\dd x\dd y\leq 3(I_1+I_2+I_3),
\end{multline*}
where
\begin{align*}
&I_1:= \int_{\R^2}\Big(\int_{\R^{2}}\big|(R_z (x,\varepsilon s)-R_z(x,0))v(s)T_\varepsilon(z) A v(t)R_z(\varepsilon t, y)\big|\dd s\dd t\Big)^2\dd x\dd y,\\
&I_2:=\int_{\R^2}\Big(\int_{\R^{2}}\big|R_z(x,0)v(s)(T_\varepsilon(z)-T(z))A v(t)R_z(\varepsilon t, y)\big|\dd s\dd t\Big)^2\dd x\dd y,\\
&I_3:=\int_{\R^2}\Big(\int_{\R^{2}}\big|R_z(x,0)v(s)T(z) A v(t)(R_z(\varepsilon t, y)-R_z(0,y))\big|\dd s\dd t\Big)^2\dd x\dd y.\\
\end{align*}
We will prove that every $I_j$ tends to zero as $\varepsilon\to 0$.

By the Fubini theorem and sub-multiplicativity of the Frobenius norm,
\begin{multline*}
I_1\leq\int_{\R}\Big(\int_{\R}|(R_z (x,\varepsilon s)-R_z(x,0))| |v(s)|\dd s\Big)^2\dd x\\
\times \int_\R\Big(\int_\R |T_\varepsilon(z) A| |v(t)| |R_z(\varepsilon t, y)|\dd t\Big)^2\dd y.
\end{multline*}
Since, according to Lemma \ref{lemm:lim_inv_matrix}, $\lim_{\varepsilon\to 0}T_\varepsilon(z)=T(z)$, the term $|T_\varepsilon(z)A|$ is bounded uniformly in $\varepsilon$ on a right neighbourhood of zero. Using the Minkowski integral inequality and then Lemma \ref{lemm:supR_z} (it holds true also with the roles of $x$ and $y$ interchanged), we obtain
\begin{equation} \label{eq:mixed_est}
\int_\R\Big(\int_\R |v(t)| |R_z(\varepsilon t, y)|\dd t\Big)^2\dd y\leq \Big(\int_\R |v(t)| \Big(\int_\R |R_z(\varepsilon t, y)|^2\dd y\Big)^{1/2}\dd t\Big)^2\leq C_z
\end{equation}
with a constant $C_z$ which depends on $z$ but is $\varepsilon$-independent.
Putting this together with Lemma \ref{lemm:res_diff}, we conclude that $\lim_{\varepsilon\to 0} I_1=0$. Similarly, one shows that $\lim_{\varepsilon\to 0} I_3=0$.

Finally, by the Fubini theorem and sub-multiplicativity of the Frobenius norm,
\begin{multline} \label{eq:I3}
I_2\leq|(T_\varepsilon(z)-T(z))A|^2 \Big(\int_\R|v(s)|\dd s\Big)^2\int_{\R}|R_z(x,0)|^2\dd x\\
\times \int_\R\Big(\int_\R |v(t)| |R_z(\varepsilon t, y)|\dd t\Big)^2\dd y.
\end{multline}
Due to Lemma \ref{lemm:lim_inv_matrix}, $\lim_{\varepsilon\to 0}|(T_\varepsilon(z)-T(z))A|=0$. The first and the second integral in \eqref{eq:I3} are bounded (and clearly $\varepsilon$-independent), since $v\in L^1(\R)$ and Lemma \ref{lemm:supR_z} holds true, respectively. The last integral in \eqref{eq:I3} is estimated by an $\varepsilon$-independent constant in \eqref{eq:mixed_est}. Therefore, $\lim_{\varepsilon\to 0} I_2=0$, too.
\hfill $\square$

\vspace{0.5em}

\subsection{Spectrum of approximations}
The eigenvalue equation for $D_A^\varepsilon$ can be written as
\begin{equation*}
(D_0-z)\psi=-A\langle v_\varepsilon,\psi\rangle v_\varepsilon
\end{equation*}
Let $z\in\C\setminus((-\infty,-|m|]\cup[|m|,+\infty))$. Then this is equivalent to
\begin{equation*}
\psi=-R_z A\langle v_\varepsilon,\psi\rangle v_\varepsilon.
\end{equation*}
Therefore, every eigenfunction is of the form $\psi=R_z A\xi v_\varepsilon$ with some $\xi\in\C^2$ such that $A\xi\neq 0$. Consequently, $z$ is an eigenvalue if and only if there exists $\xi\notin\ker(A)$  such that $R_z A\xi v_\varepsilon=-R_z A\langle v_\varepsilon,R_z v_\varepsilon\rangle A\xi v_\varepsilon$, i.e.,
\begin{equation} \label{eq:ev_approx_eq}
(\sigma_0+A\langle v_\varepsilon,R_z v_\varepsilon\rangle) A\xi=0.
\end{equation}
If this holds then clearly there exists $\tau\in\C^2\setminus\{0\}$ such that
\begin{equation} \label{eq:ev_eq_eta}
(\sigma_0+A\langle v_\varepsilon,R_z v_\varepsilon\rangle) \tau=0.
\end{equation}
The corresponding eigenfunctions would be of the form $\psi=R_z\tau v_\varepsilon$. On the other hand, if there is $\tau\in\C^2\setminus\{0\}$ such that \eqref{eq:ev_eq_eta} holds then $\tau=-A\langle v_\varepsilon,R_z v_\varepsilon\rangle\tau$, and so $\xi:=\langle v_\varepsilon,R_z v_\varepsilon\rangle\tau\notin\ker(A)$ and
$$(\sigma_0+A\langle v_\varepsilon,R_z v_\varepsilon\rangle)A\xi=-(\sigma_0+A\langle v_\varepsilon,R_z v_\varepsilon\rangle)\tau=0,$$
i.e., \eqref{eq:ev_approx_eq} holds true.
Recalling formula \eqref{eq:product} (together with  the fact that $\beta_\varepsilon=0$) and noting that
\begin{equation*} 
\alpha_\varepsilon(k)=\alpha_1(\varepsilon k)
\end{equation*}
for 
$$\alpha_{\varepsilon}(k)\equiv\alpha_\varepsilon=\int_{\R^{2}}v_{\varepsilon}(x)\ee^{ik|x-y|}v_{\varepsilon}(y)\dd x\dd y,$$
we get the following result.

\begin{proposition} \label{prop:ev}
Let $\varepsilon>0$. Then $z\in\C\setminus((-\infty,-|m|]\cup[|m|,+\infty))$ is an eigenvalue of $D_A^\varepsilon$ if and only if 
\begin{equation} \label{eq:ev_DAeps}
\det\Big(\sigma_0+\frac{i}{2} \alpha_1(\varepsilon k(z)) A Z(z) \Big)=0.
\end{equation}
The geometric multiplicity of such $z$ equals $\dim\ker\left(\sigma_0+\frac{i}{2} \alpha_1(\varepsilon k(z)) A Z(z) \right)\leq 2$ and the associated eigenfunctions are of the form
$$\psi=R_z\tau v_\varepsilon \text{ with } \tau\in\ker\Big(\sigma_0+\frac{i}{2} \alpha_1(\varepsilon k(z)) A Z(z) \Big)\setminus\{0\}.$$
\end{proposition}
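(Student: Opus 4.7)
The plan is to exploit that the perturbation $A\otimes W_\varepsilon$ has rank at most two in order to reduce the eigenvalue problem on $L^2(\R;\C^2)$ to a linear algebraic condition on $\C^2$, and then to evaluate the relevant $2\times 2$ matrix explicitly using the free resolvent kernel \eqref{eq:freeResolvent}.

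First I would rewrite the eigenvalue equation $D_A^\varepsilon\psi=z\psi$ as
$(D_0-z)\psi=-A\langle v_\varepsilon,\psi\rangle v_\varepsilon$.
Since $z\in\C\setminus((-\infty,-|m|]\cup[|m|,+\infty))=\res(D_0)$, applying $R_z$ yields $\psi=-R_z\bigl(A\langle v_\varepsilon,\psi\rangle\bigr)v_\varepsilon$. Consequently every eigenfunction has the form $\psi=R_z\tau v_\varepsilon$ for some $\tau\in\C^2$, and $\tau\neq 0$ (otherwise $\psi=0$). Testing this ansatz against $v_\varepsilon$ gives $\langle v_\varepsilon,\psi\rangle=\langle v_\varepsilon,R_zv_\varepsilon\rangle\tau$, and substituting back into the identity $\tau=-A\langle v_\varepsilon,\psi\rangle$ produces the finite-dimensional condition
$$(\sigma_0+A\langle v_\varepsilon,R_zv_\varepsilon\rangle)\tau=0.$$
Conversely, any $\tau\in\C^2\setminus\{0\}$ in this kernel produces $\psi:=R_z\tau v_\varepsilon\in H^1(\R;\C^2)=\dom(D_A^\varepsilon)$ satisfying the eigenvalue equation; moreover $\psi\neq 0$, because $R_z\tau v_\varepsilon=0$ combined with $(D_0-z)\psi=-A\langle v_\varepsilon,\psi\rangle v_\varepsilon$ would force $\tau v_\varepsilon=0$, hence $\tau=0$.

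Next I would evaluate $\langle v_\varepsilon,R_zv_\varepsilon\rangle$. The formula \eqref{eq:product}, derived directly from \eqref{eq:freeResolvent}, gives
$\langle v_\varepsilon,R_zv_\varepsilon\rangle=\tfrac{i}{2}\operatorname{diag}(\zeta(z)\alpha_\varepsilon,\zeta(z)^{-1}\alpha_\varepsilon)$ plus off-diagonal entries equal to $\tfrac{i}{2}\beta_\varepsilon$, and $\beta_\varepsilon=0$ by the antisymmetry of its integrand under $(x,y)\mapsto(y,x)$. The scaling change of variables $(x,y)\mapsto(\varepsilon x,\varepsilon y)$ yields $\alpha_\varepsilon(k)=\alpha_1(\varepsilon k)$, so $\langle v_\varepsilon,R_zv_\varepsilon\rangle=\tfrac{i}{2}\alpha_1(\varepsilon k(z))Z(z)$. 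Plugging this into the kernel condition gives the asserted determinantal equation \eqref{eq:ev_DAeps}.

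For the geometric multiplicity and the form of eigenfunctions, I would observe that the linear map $\tau\mapsto R_z\tau v_\varepsilon$ from $\C^2$ into $L^2(\R;\C^2)$ is injective: if $R_z\tau v_\varepsilon=0$ then applying $(D_0-z)$ gives $\tau v_\varepsilon=0$ in $L^2$, so $\tau=0$ (as $v_\varepsilon\not\equiv 0$). Combined with the equivalence established above, this identifies $\ker(D_A^\varepsilon-z)$ with $\ker(\sigma_0+\tfrac{i}{2}\alpha_1(\varepsilon k(z))AZ(z))$ via the same map, yielding both the multiplicity formula and the explicit description of eigenfunctions. The bound by $2$ is automatic since the matrix acts on $\C^2$. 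The main subtlety to watch is the back-and-forth between the two formulations of the kernel condition (whether one writes $\sigma_0+A\cdot(\ldots)$ or $\sigma_0+(\ldots)\cdot A$); these yield the same determinant, but I would be careful to track which vector plays the role of $\tau$ so that the claimed eigenfunction formula $\psi=R_z\tau v_\varepsilon$ comes out with the correct $\tau$.
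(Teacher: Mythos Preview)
Your proof is correct and follows essentially the same approach as the paper: rewrite the eigenvalue equation using the free resolvent, reduce to a $2\times 2$ linear system on $\C^2$, and evaluate $\langle v_\varepsilon,R_zv_\varepsilon\rangle$ via \eqref{eq:product} together with $\beta_\varepsilon=0$ and the scaling $\alpha_\varepsilon(k)=\alpha_1(\varepsilon k)$. The only difference is cosmetic: the paper first parametrizes eigenfunctions as $\psi=R_zA\xi v_\varepsilon$ with $\xi\notin\ker(A)$ and then establishes equivalence with the condition on $\tau$, whereas you go directly to the $\tau$-formulation; your injectivity argument for $\tau\mapsto R_z\tau v_\varepsilon$ (apply $D_0-z$) is the cleanest way to handle both the non-triviality of $\psi$ and the multiplicity count.
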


Due to the norm-resolvent convergence result of Theorem \ref{theo:approx}, the spectrum of the limiting operator $D_A$ cannot suddenly expand nor contract as $\varepsilon\to 0$ in the self-adjoint setting, i.e., when $A$ is hermitian, cf. \cite[Thm. VIII.23]{RS1}. In the non-self-adjoint case, only the principle of non-contraction is preserved in general, see \cite[Sect. IV.\S 3, Sect. 1--2]{kato}. (Note that in the present setting, one may get the non-contraction principle directly from Proposition \ref{prop:approx_res}.)

Nevertheless, for a finite system of eigenvalues, the principle of non-expansion also holds true \cite[Chpt. IV, \S 3, Sect. 5]{kato}. In particular, if $z\in\C$ is a discrete eigenvalue of $D_A$ separated by a closed curve $\Gamma$ from the rest of $\sigma(D_A)$ then, for all $\varepsilon$ sufficiently small, the part of $\sigma(D_A^\varepsilon)$ enclosed by $\Gamma$ consists of eigenvalues whose total algebraic multiplicity equals to the algebraic multiplicity of $z$.

Let us now inspect the spectrum of $D_A^\varepsilon$ in more detail. Recall that the eigenvalues of $D_A^\varepsilon$ are given by the zeros of the function
$$\eta_\varepsilon(z):=\det\Big(\sigma_0+\frac{i}{2} \alpha_1(\varepsilon k(z)) A Z(z) \Big),$$
which is defined  on $\C\setminus((-\infty,-|m|]\cup[|m|,+\infty))$.

\begin{lemma} \label{lemma:analyticity}
The function $\eta_\varepsilon$ is analytic on $\C\setminus((-\infty,-|m|]\cup[|m|,+\infty))$.
\end{lemma}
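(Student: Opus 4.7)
The plan is to show that $\eta_\varepsilon$ is a composition of analytic building blocks, concluded by noting the determinant is a polynomial in the entries of its argument matrix. The four pieces to analyze are $k(z)$, $\zeta(z)$ (hence $Z(z)$), and $\alpha_1(k)$ viewed as a function of $k$.

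First I would handle the algebraic square-root pieces. The branch convention $\Im\sqrt{w}>0$ on $\C\setminus[0,+\infty)$ makes $w\mapsto\sqrt{w}$ holomorphic there. The set $\{z\in\C:z^{2}-m^{2}\in[0,+\infty)\}$ equals $(-\infty,-|m|]\cup[|m|,+\infty)$, so $k(z)=\sqrt{z^{2}-m^{2}}$ is holomorphic on the complement, does not vanish there, and satisfies $\Im k(z)>0$ throughout. Consequently $\zeta(z)=(z+m)/k(z)$ and $\zeta(z)^{-1}=k(z)/(z+m)$ (non-singular except at $z=-m$, which lies in the cut) are holomorphic on $\C\setminus((-\infty,-|m|]\cup[|m|,+\infty))$, so every entry of $Z(z)$ is holomorphic on that set.

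The main step is to show that $\alpha_{1}(k)=\int_{\R^{2}}v(x)v(y)e^{ik|x-y|}\,\dd x\,\dd y$ is holomorphic on the open upper half-plane $\{k\in\C:\Im k>0\}$. For any $k_{0}$ with $\Im k_{0}>0$ and $k$ in the disc $|k-k_{0}|<\tfrac12\Im k_{0}$, the mean value theorem gives
\begin{equation*}
\left|\frac{e^{i(k+h)|x-y|}-e^{ik|x-y|}}{h}\right|\leq |x-y|\sup_{|w-k|\leq|h|}e^{-(\Im w)|x-y|}\leq |x-y|\,e^{-\tfrac14(\Im k_{0})|x-y|}
\end{equation*}
for $|h|$ small enough. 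Since $t\mapsto t\,e^{-ct}$ is bounded on $[0,+\infty)$ for $c>0$, this is dominated by a constant times $|v(x)||v(y)|\in L^{1}(\R^{2})$. The dominated convergence theorem then yields complex differentiability of $\alpha_{1}$ at $k_{0}$ with
\begin{equation*}
\alpha_{1}'(k)=\int_{\R^{2}}v(x)v(y)\,i|x-y|\,e^{ik|x-y|}\,\dd x\,\dd y,
\end{equation*}
and since $k_{0}$ was arbitrary in the upper half-plane, $\alpha_{1}$ is holomorphic there.

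Combining these, $z\mapsto\alpha_{1}(\varepsilon k(z))$ is a composition of holomorphic maps on $\C\setminus((-\infty,-|m|]\cup[|m|,+\infty))$, since $\varepsilon k(z)$ maps this domain into the upper half-plane. Multiplying by the holomorphic matrix $\tfrac{i}{2}AZ(z)$ and adding $\sigma_{0}$ gives a matrix with entries holomorphic on the same set; the determinant, being a polynomial in these entries, is likewise holomorphic, which proves the lemma. The only genuine obstacle is the analyticity of $\alpha_{1}$, and the key point there is that the decay of $e^{ik|x-y|}$ for $\Im k>0$ more than compensates the linear growth of $|x-y|$ from the derivative, allowing $L^{1}$ domination of the difference quotient uniformly near each $k_{0}$.
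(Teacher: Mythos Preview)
Your proof is correct and follows the same decomposition as the paper: verify that $k(z)$, $\zeta(z)$, and $\zeta(z)^{-1}$ are holomorphic on the stated domain, then conclude by composition and the fact that the determinant is a polynomial in the matrix entries. The paper's proof is in fact terser than yours---it only checks $k$, $\zeta$, $\zeta^{-1}$ and says nothing about $\alpha_{1}$---so your explicit dominated-convergence argument for the holomorphy of $\alpha_{1}$ on the upper half-plane fills a step the paper leaves implicit.
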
 

\begin{proof}
It is sufficient to scrutinize analyticity of $k(z)$, $\zeta(z)$, and $\zeta^{-1}(z)$. Since $w\mapsto\sqrt{w}$ is analytic on $\C\setminus[0,+\infty)$, $k(z)=\sqrt{z^2-m^2}$ is analytic on $\C\setminus((-\infty,-|m|]\cup[|m|,+\infty))$. The function $\zeta(z)$ is given by a ratio of two non-zero analytic functions on $\C\setminus((-\infty,-|m|]\cup[|m|,+\infty))$. Therefore, both $\zeta(z)$ and $\zeta^{-1}(z)$ are analytic on the same set.
\end{proof}

\begin{proposition} \label{prop:app_spec}
Let $\varepsilon>0$. Then the following holds.
\begin{enumerate}[label=\alph*),leftmargin=*]
\item There are at most countably many eigenvalues of $D_A^\varepsilon$ in $\C\setminus((-\infty,-|m|]\cup[|m|,+\infty))$ and they may accumulate only  in $(-\infty,-|m|]\cup[|m|,+\infty)$ or at infinity.
\item $\sigma_{ess}(D_A^\varepsilon)=(-\infty,-|m|]\cup[|m|,+\infty)$.
\end{enumerate}
\end{proposition}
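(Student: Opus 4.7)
My plan is to derive part (a) from the analyticity of the characteristic function $\eta_\varepsilon$ via the identity theorem, and part (b) by combining part (a) with a Weyl-type theorem for compact perturbations.

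For part (a), by Proposition \ref{prop:ev}, the eigenvalues of $D_A^\varepsilon$ outside $(-\infty,-|m|]\cup[|m|,+\infty)$ are precisely the zeros of $\eta_\varepsilon$. By Lemma \ref{lemma:analyticity}, $\eta_\varepsilon$ is analytic on the connected open set $\Omega:=\C\setminus((-\infty,-|m|]\cup[|m|,+\infty))$, so the identity theorem applies as soon as one verifies $\eta_\varepsilon\not\equiv 0$. To do so, I would probe the asymptotic behaviour along the imaginary axis: for $z=iy$ with $y\to+\infty$, one has $k(z)=i\sqrt{y^2+m^2}$, hence $\Im(\varepsilon k(z))\to+\infty$. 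The integrand $v(x)\ee^{i\varepsilon k(z)|x-y|}v(y)$ defining $\alpha_1(\varepsilon k(z))$ is dominated in modulus by $|v(x)||v(y)|\in L^1(\R^2)$ and tends to $0$ pointwise a.e., so by dominated convergence $\alpha_1(\varepsilon k(z))\to 0$. Since $Z(z)$ stays bounded along this ray (indeed $\zeta(iy)\to 1$), the whole matrix $\tfrac{i}{2}\alpha_1(\varepsilon k(z)) AZ(z)$ converges to the zero matrix, so $\eta_\varepsilon(iy)\to\det\sigma_0=1$. Consequently $\eta_\varepsilon\not\equiv 0$ on $\Omega$, and by the identity theorem its zero set is at most countable and has no accumulation point in $\Omega$; any accumulation must occur on $\partial\Omega=(-\infty,-|m|]\cup[|m|,+\infty)$ or at infinity, which is precisely the claim.

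For part (b), observe that $A\otimes W_\varepsilon=A\otimes|v_\varepsilon\rangle\langle v_\varepsilon|$ has rank at most two, hence is a compact bounded perturbation of the self-adjoint operator $D_0$ (alternatively, the resolvent difference in Proposition \ref{prop:approx_res} is explicitly rank $\le 2$). Under compact perturbations, the Fredholm essential spectrum $\sigma_e(T):=\{z\in\C:\, T-z\text{ is not Fredholm}\}$ is stable, so
\begin{equation*}
\sigma_e(D_A^\varepsilon)=\sigma_e(D_0)=(-\infty,-|m|]\cup[|m|,+\infty).
\end{equation*}
Because $\Omega$ is connected (so no bounded components can harbour extra essential spectrum), it suffices to show that every point $z\in\sigma(D_A^\varepsilon)\cap\Omega$ is discrete. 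By part (a), such a $z$ is an isolated eigenvalue of $D_A^\varepsilon$, and since $z\notin\sigma_e(D_A^\varepsilon)$ the operator $D_A^\varepsilon-z$ is Fredholm of index zero, which by standard Riesz projection arguments forces the algebraic multiplicity to be finite. Hence $z\in\sigma_d(D_A^\varepsilon)$, and therefore $\sigma_{ess}(D_A^\varepsilon)=\sigma(D_A^\varepsilon)\setminus\sigma_d(D_A^\varepsilon)=(-\infty,-|m|]\cup[|m|,+\infty)$.

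The only genuinely delicate point is the non-vanishing of $\eta_\varepsilon$; everything else is an application of the identity theorem and the compact-perturbation version of the Weyl theorem. The asymptotic argument above takes care of the former cleanly, so I do not expect significant obstacles.
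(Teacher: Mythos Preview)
Your argument follows essentially the same line as the paper's, and the core ideas (identity theorem for $\eta_\varepsilon$ after ruling out $\eta_\varepsilon\equiv 0$ via the imaginary-axis asymptotics, then Weyl-type stability of the essential spectrum) are correct. There is, however, one genuine oversight.

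You assert that $\Omega=\C\setminus((-\infty,-|m|]\cup[|m|,+\infty))$ is connected. This is true for $m\neq 0$, but for $m=0$ the removed set is all of $\R$ and $\Omega=\C_+\cup\C_-$ has two components. Your asymptotic computation only probes $z=iy$ with $y\to+\infty$, so in the massless case it shows $\eta_\varepsilon\not\equiv 0$ on $\C_+$ but says nothing about $\C_-$; the identity theorem must then be applied component by component. The paper handles this by computing $\lim_{a\to+\infty}\eta_\varepsilon(\pm ia)=1$ (both signs) and explicitly remarking that for $m=0$ one argues on each half-plane separately. The fix in your write-up is immediate (the same dominated-convergence estimate applies to $z=-iy$), but as stated the argument is incomplete for $m=0$. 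The same connectedness claim reappears in your part (b); there it is harmless because both $\C_\pm$ are unbounded and you treat each spectral point individually anyway, but you should drop the parenthetical.

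For part (b), your route is a minor variant of the paper's: you invoke stability of the Fredholm essential spectrum under the bounded finite-rank perturbation $A\otimes W_\varepsilon$ and then argue pointwise that isolated eigenvalues in $\Omega$ have finite algebraic multiplicity. The paper instead uses the compactness of the resolvent difference from Proposition~\ref{prop:approx_res} together with $\res(D_A^\varepsilon)\cap\C_\pm\neq\emptyset$ (supplied by part (a)) to apply \cite[Thm.~XIII.14]{RS4} directly. Both arguments are valid; yours is slightly more hands-on, the paper's is a one-line citation once the resolvent formula is in place.
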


\begin{proof}
$a)$ First, note that if $\eta_\varepsilon$ is constant then $\eta_\varepsilon\equiv 1$, since $\lim_{a\to +\infty}\eta_\varepsilon(\pm ia)=1$,
because $\lim_{a\to +\infty}\alpha_1(\varepsilon k(\pm ia))=0$,
by the dominated convergence theorem. Consequently, $\eta_\varepsilon$ has no zeros and there are no eigenvalues. (In the case $m=0$, one concludes the same for each connected component of the domain of $\eta_\varepsilon$, i.e., $\C_\pm$.) Now, assume that $\eta_\varepsilon$ is non-constant. Then  $a)$ follows immediately from Proposition \ref{prop:ev}, Lemma \ref{lemma:analyticity}, and the identity theorem for analytic functions.

$b)$ Since the resolvent of $D_A^\varepsilon$ is a compact perturbation of the free resolvent $R_z$ and $\res(D_A^\varepsilon)\cap\C_\pm\neq\emptyset$, by $a)$ and Proposition \ref{prop:approx_res}, we may just apply the Weyl essential spectrum theorem \cite[Thm. XIII.14]{RS4}.

\end{proof}

\begin{remark}
One can  localize the eigenvalues of $D_A^\varepsilon$ using spectral enclosures derived in \cite[Thm. 2.1, Cor. 2.6]{CuTr_16}. In particular, for every $z\in\sigma(D_A^\varepsilon)$, we have $|\Im z|\leq\|A\otimes W_\varepsilon\|=\|A\|\varepsilon^{-1}\|v\|_{L^2(\R)}^2$. The upper bound clearly explodes as $\varepsilon\to 0$. Another enclosure is given by the union of a ball centred at the origin and sectors lying symmetrically around the real axis. However, this enclosure can not be made uniform in $\varepsilon$ either.
\end{remark}

At the end of this section, we will show with an example that in the \textit{strongly non-self-adjoint} case the point spectrum of the formal limiting operator may suddenly expand. Put
$$A=\begin{pmatrix}
0 & 2\\
-2 & 0
\end{pmatrix}.$$
Then, according to Theorem \ref{theo:spec}, $\sigma_p(D_A)=\C\setminus((-\infty,-|m|]\cup[|m|,+\infty))$. Next, we have
$$\det\Big(\sigma_0+\frac{i}{2} \alpha_1(\varepsilon k(z)) A Z(z) \Big)=1-\alpha_1(\varepsilon k(z))^2.$$
Hence, \eqref{eq:ev_DAeps} is equivalent to 
\begin{equation} \label{eq:evEx}
\alpha_1(\varepsilon k(z))^{2}=1.
\end{equation}
Note that $w\mapsto \alpha_1(w)$ is non-constant on $\C_+$, because
$$\lim_{w\to 0,\Im w>0}\alpha_1(w)=1 \quad\text{and}\quad \lim_{a\to+\infty}\alpha_1(ia)=0.$$
Assume that $v$ is compactly supported. Then $w\mapsto \alpha_1(w)$ is well defined and analytic on $\C$. By the identity theorem, the equation 
\begin{equation} \label{eq:evEx2}
\alpha_1(w)^2=1
\end{equation}
has at most countably many zeros that may accumulate only  at infinity. Moreover, if $w\in\C_+$ satisfies \eqref{eq:evEx2} then the corresponding eigenvalues of $D_A^\varepsilon$ are given by 
$$z_\pm=\pm\sqrt{\left(\frac{w}{\varepsilon}\right)^2+m^2}.$$
We conclude that for a given compact set $\mathcal{C}$ in $\C\setminus((-\infty,-|m|]\cup[|m|,+\infty))$, there exists $\varepsilon_{\mathcal{C}}$ such that for all $\varepsilon\in(0,\varepsilon_{\mathcal{C}})$, the set $\mathcal{C}$ contains no eigenvalues of $D_A^\varepsilon$.

\section{Non-relativistic limit} \label{sec:limit}

\subsection{Introducing the speed of light}
In the previous sections, we put the speed of light $c$ equal to one to have a concise notation. However, it is necessary  to introduce $c$-dependent quantities before performing the non-relativistic limit. To this purpose we start with the following formal differential expression
$$\mathscr{D}^{m,c}=-c\sigma_1(-i\dd/\dd x)+\sigma_3 mc^2+cA\delta$$
and denote by $D_A^{m,c}$ the operator associated to this expression by the same procedure as in the case $c=1$. Then $D_A=D_A^{m,1}$ and the transmission condition for  the functions in $\dom(D_A^{m,c})$ is still given by \eqref{eq:TC}, i.e., it is independent of $m$ and $c$. Consequently, we get
$$(D_A^{m,c}-z)^{-1}=(cD_A^{mc,1}-z)^{-1}=\frac{1}{c}\Big(D_A^{mc,1}-\frac{z}{c}\Big)^{-1}.$$
Using this scaling property together with \eqref{eq:resolvent} and \eqref{eq:freeResolvent} we deduce that $R_z^{A,c}:=(D_A^{m,c}-z)^{-1}$ is the integral operator with the kernel
\begin{equation*}
R_z^{A,c}(x,y)=R_z^c(x,y)-cR_z^c(x,0)\Big(\sigma_0+\frac{i}{2}AZ_c(z)\Big)^{-1}A R_z^c(0,y),
\end{equation*}
where 
\begin{equation*}
R_z^c(x,y):=\frac{i}{2c}(Z_c(z)+\sgn(x-y)\sigma_1)\ee^{ik_c(z)|x-y|}
\end{equation*}
is the integral kernel of the free resolvent, i.e., $R_z^c=(D_0^{m,c}-z)^{-1}$ and
$$Z_c(z):=\begin{pmatrix}
\zeta_c(z) & 0\\
0 & \zeta_c^{-1}(z)
\end{pmatrix},
\quad \zeta_c(z):=\frac{z+mc^2}{ck(z)},\quad ck_c(z):=\sqrt{z^{2}-(mc^2)^{2}}.$$

\subsection{General non-relativistic point interaction} From now on, let $m>0$. By the non-relativistic limit of $D_A^{m,c}$ we mean the limit $\lim_{c\to+\infty}(D_A^{m,c}-mc^2)$, if it exists in some sense. In the self-adjoint setting $(A=A^*)$, it was proved that such a limit exits in the "norm-resolvent topology" after making the transmission condition $c$-dependent \cite{BeDa_94}. 
This extends the classical result for the free Dirac operator which says that
\begin{equation} \label{eq:free_limit}
\lim_{c\to+\infty}(D_0^{m,c}-mc^2-z)^{-1}=\begin{pmatrix}1 & 0\\ 0 & 0\end{pmatrix}\otimes(H_0-z)^{-1}
\end{equation}
in the operator norm, cf. \cite{GeGrTh_84}. Here, 
\begin{equation*}
H_0=-\frac{1}{2m}\frac{\dd^2}{\dd x^2} \quad\text{with } \dom(H_0)=H^2(\R)
\end{equation*}
stands for the  non-relativistic one-dimensional free Hamiltonian. After restricting $H_0$ to the functions that vanish at $x=0$ together with their first derivatives, it is possible to construct a four-real-parametric family of self-adjoint extensions \cite{Se_86gen}, which are reffered to as the generalized point interactions. In \cite{BeDa_94}, the transmission condition was chosen $c$-dependent in the way that one gets all of these extensions as the non-relativistic limits of their relativistic counterparts. 
Note that a different parametrization of the transmission conditions that does not make use of the matrix $A$ was employed there. However, one can deduce that in our setting elements of $A$ should be scaled as
\begin{equation} \label{eq:A_c}
A\mapsto A_c=\begin{pmatrix}
\frac{1}{2mc}\alpha & \beta\\
\gamma & 2mc\delta
\end{pmatrix}
\end{equation}
when performing the non-relativistic limit. Otherwise, it may happen that the limit does not exist or it is just the free operator.

Our aim  is to reproduce the whole family of non-relativistic not necessarily self-adjoint generalized point interactions as the non-relativistic limits of their relativistic counterparts. To introduce such a family we start with formal differential expressions
\begin{equation*}
\mathscr{H}:=-\frac{1}{2m}\frac{\dd^2}{\dd x^2},\quad \mathscr{H}_A:=\mathscr{H}+\frac{1}{2m}\big(\alpha|\delta_0\rangle\langle\delta_0|+i\beta|\delta_0\rangle\langle\delta'_0|-i\gamma|\delta'_0\rangle\langle\delta_0|+\delta|\delta'_0\rangle\langle\delta'_0|\big),
\end{equation*}
where $m>0$ and $\alpha,\,\beta,\,\gamma,\,\delta\in\C$ are the entries of the matrix $A$, see  \eqref{eq:A}. By the inner product of the Dirac distribution  $\delta_0$ or its derivative $\delta_0'$ with a smooth function $\psi$ we mean just the distributional action $\langle\delta_0,\psi\rangle=\psi(0)$ or $\langle\delta_0',\psi\rangle=-\psi'(0)$. Now, let $\psi\in H^2(\R_-)\oplus H^2(\R_+)$. Then we extend the actions of these distributions as follows,
\begin{equation*}
\langle\delta_0,\psi\rangle:=\frac{\psi(0_+)+\psi(0_-)}{2},\quad \langle\delta_0',\psi\rangle:=-\frac{\psi'(0_+)+\psi'(0_-)}{2},
\end{equation*}
cf. \cite{Ku_96} or \cite[Sect. 3.2.4]{singular}. If we want $\mathscr{H}_A\psi$ to belong to $L^2(\R)$ then the singular contributions have to cancel out which yields
\begin{align*}
 - (\psi'(0+)-\psi'(0-))+ \alpha \frac{\psi(0+)+\psi(0-)}{2}-i\beta\frac{\psi'(0+)+\psi'(0-)}{2}  &= 0,\\
-(\psi(0+)-\psi(0-))- i\gamma\frac{\psi(0+)+\psi(0-)}{2} - \delta\frac{\psi'(0+)+\psi'(0-)}{2}&= 0.
\end{align*}
Introducing the matrix
\begin{equation} \label{eq:Vdef}
V:=\begin{pmatrix} i & 0\\ 0 & 1\end{pmatrix}
\end{equation}
and the boundary value operators $\tilde\Gamma_{j}:\, H^2(\R_-)\oplus H^2(\R_+)\to \C^2,\, j=1,2$,
\begin{equation} \label{eq:BT}
\tilde\Gamma_1\psi:=\begin{pmatrix}
\psi'(0+)-\psi'(0-)\\
\psi(0+)-\psi(0-)
\end{pmatrix},\quad
\tilde\Gamma_2\psi:=\frac{1}{2}\begin{pmatrix}
\psi(0+)+\psi(0-)\\
-\psi'(0+)-\psi'(0-)
\end{pmatrix},
\end{equation}
this can be rewritten as
\begin{equation} \label{eq:TC_nonrel}
\tilde\Gamma_1\psi=VAV^*\tilde\Gamma_2\psi.
\end{equation}
Therefore, we are motivated to define the non-relativistic Hamiltonian with general (not necessarily self-adjoint) point interaction as
\begin{equation} \label{eq:H_A}
\begin{split}
&\dom(H_A)=\{\psi\equiv\psi_-\oplus\psi_+\in H^2(\R_-)\oplus H^2(\R_+)|\, \text{\eqref{eq:TC_nonrel} holds}\}\\
&H_A\psi=\mathcal{H}\psi_-\oplus\mathcal{H}\psi_+.
\end{split}
\end{equation}

The amount of literature on operators $H_A$ in the self-adjoint setting , i.e., if and only if $A=A^*$ (as we will show in Proposition \ref{prop:app1}), is so huge that we will not even try to provide an overview on the subject. Nevertheless, one of the pilot articles \cite{Se_86gen} and the monograph \cite{solvable} may serve as good starting points for a practically endless chain of references. Interestingly, a similar formal expression as $\mathscr{H}_A$ was considered in \cite{Se_86gen} even in the non-self-adjoint setting and the corresponding Green's function was derived there in a heuristic manner. Later,  the questions of  $\mathcal{P}\mathcal{T}$-symmetry of $H_A$ and  similarity of $H_A$ to a self-adjoint operator were addressed in \cite{GrKu_14}. In the present paper, we are using practically the same transmission condition as was deduced there.  Similar problems as in \cite{GrKu_14} were studied in \cite{HuKrSi_15} in a much more general setting of a quantum graph. Of course, a different parametrization of the transmission condition, that suits better the graph problems, was used there. Although in all of three mentioned papers \cite{Se_86gen,GrKu_14,HuKrSi_15}, a certain formula for the resolvent is included, none can be used directly for our purposes. For this reason and also for the reader's convenience, we decided to include a short calculation of the resolvent of $H_A$ in Appendix along with other basic results for $H_A$. We will show that $(H_A-z)^{-1}$ is the integral operator with the kernel
\begin{equation} \label{eq:non_rel_res}
(H_A-z)^{-1}(x,y)=(H_0-z)^{-1}(x,y)-\Big(\frac{1}{i\mu(z)}f_z(x),g_z(x)\Big)K_A(z)
\begin{pmatrix}
\frac{1}{i\mu(z)}f_z(y)\\
g_z(y)
\end{pmatrix},
\end{equation}
where
\begin{equation} \label{eq:basis}
\mu(z):=\sqrt{2m z},\quad f_z(x):=\ee^{i\mu(z)|x|},\quad g_z(x):=\sgn(x)f_z(x),
\end{equation}
and
\begin{equation} \label{eq:K_A}
K_A(z):=\frac{i m}{4-\det(A)+\frac{2i}{\mu(z)}\alpha+2i\mu(z)\delta}
\begin{pmatrix}
\mu(z)\det(A)-2i\alpha & 2\beta\\
-2\gamma & \frac{\det(A)}{\mu(z)}-2i\delta
\end{pmatrix}.
\end{equation}
For the sake of completeness, let us also recall that
\begin{equation*}
(H_0-z)^{-1}(x,y)=\frac{im}{\mu(z)}\ee^{i\mu(z)|x-y|}.
\end{equation*} 

\begin{theorem} \label{theo:limit}
Let $m>0$ and $z\in\C\setminus[0,+\infty)$ be such that 
\begin{equation} \label{eq:det_ass}
4-\det(A)+\frac{2i}{\mu(z)}\alpha+2i\mu(z)\delta\neq 0,
\end{equation}
and $A_c$ be as in \eqref{eq:A_c}. Then for all sufficiently large $c$, $z\in\res(D^{m,c}_{A_c}- mc^2)$ and
\begin{equation*}
\lim_{c\to+\infty}\big\|(D^{m,c}_{A_c}-mc^2-z)^{-1}-\begin{pmatrix}1 & 0\\ 0 & 0\end{pmatrix}\otimes (H_A-z)^{-1}\big\|=0.
\end{equation*}
\end{theorem}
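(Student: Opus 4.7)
The plan is to start from the explicit Krein-type resolvent formula of Theorem~\ref{theo:res}. Combined with the scaling $(D_A^{m,c}-z)^{-1}=c^{-1}(D_A^{mc,1}-z/c)^{-1}$ recorded at the beginning of this section, the resolvent $(D_{A_c}^{m,c}-mc^2-z)^{-1}$ has integral kernel
\begin{equation*}
R^c_{z+mc^2}(x,y)-c\,R^c_{z+mc^2}(x,0)\,M_c^{-1}A_c\,R^c_{z+mc^2}(0,y),\qquad M_c:=\sigma_0+\tfrac{i}{2}A_cZ_c(z+mc^2),
\end{equation*}
whenever $M_c$ is invertible. Using $ck_c(z+mc^2)=\sqrt{z^2+2mc^2z}$, I would first check that $k_c(z+mc^2)\to\mu(z)$, $\zeta_c(z+mc^2)\sim 2mc/\mu(z)$, and $\det M_c\to\tfrac14(4-\det(A)+2i\alpha/\mu(z)+2i\mu(z)\delta)$, which is non-zero by \eqref{eq:det_ass}. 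This places $z\in\res(D_{A_c}^{m,c}-mc^2)$ for every sufficiently large $c$.

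The free part converges by the classical non-relativistic limit \eqref{eq:free_limit} of \cite{GeGrTh_84}: $R^c_{\cdot+mc^2}\to e_{11}\otimes(H_0-z)^{-1}$ in operator norm, where $e_{11}:=\begin{pmatrix}1&0\\0&0\end{pmatrix}$. The remaining task is to handle the rank-$\le 2$ correction. Writing out the entries of $cR^c_{z+mc^2}(x,0)$, of $M_c^{-1}A_c$, and of $R^c_{z+mc^2}(0,y)$ explicitly, the asymmetric scaling in $A_c$ is arranged so that $M_c^{-1}A_c$ has its $(1,1)$-entry of order $c^{-1}$, its $(2,2)$-entry of order $c$, and its off-diagonal entries of order $1$; while the two outer factors carry a $(1,1)$-entry scaling like $\zeta_c\sim c$ and a $(2,2)$-entry scaling like $\zeta_c^{-1}\sim c^{-1}$. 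An entry-by-entry bookkeeping then shows that these blowups and vanishings exactly compensate in the $(1,1)$-slot of the product, while the $(1,2)$- and $(2,1)$-entries are $O(c^{-1})$ and the $(2,2)$-entry is $O(c^{-2})$. Hence the limiting kernel carries exactly the tensor structure $e_{11}\otimes(\cdot)$ demanded by the theorem.

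To identify the surviving $(1,1)$ contribution I would collect the terms that remain, express them as a bilinear form in $f_z(x),g_z(x),f_z(y),g_z(y)$, and simplify the resulting $2\times 2$ coefficient matrix using $\alpha\delta-\beta\gamma=\det(A)$ together with the limit of $\det M_c$ above. The outcome coincides with $-K_A(z)$ from \eqref{eq:K_A}, matching \eqref{eq:non_rel_res}. Finally, the correction is a sum of at most four rank-one operators $\lambda_c|u_c\rangle\langle v_c|$ with $u_c,v_c$ linear combinations of $f_z^c(x)\,e_i$ and $g_z^c(x)\,e_j$, where $f_z^c(x):=\ee^{ik_c(z+mc^2)|x|}$ and $g_z^c(x):=\sgn(x)f_z^c(x)$. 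Since $\Im k_c(z+mc^2)$ is bounded away from zero as $c\to+\infty$ (as $z\notin[0,+\infty)$), dominated convergence yields $f_z^c\to f_z$ and $g_z^c\to g_z$ in $L^2(\R)$ with uniform $L^2$ bounds, so each rank-one piece converges in Hilbert-Schmidt and hence operator norm. Combined with the free-resolvent convergence this gives the theorem.

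The main obstacle is the bookkeeping in the third step: one must carry several diverging and vanishing factors produced by the asymmetric scaling of $A_c$, exhibit their exact cancellation, and then perform the algebra that identifies the limiting coefficient matrix with $K_A(z)$—routine but somewhat delicate.
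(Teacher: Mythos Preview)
Your proposal is correct and follows essentially the same route as the paper's own proof: both start from the explicit Krein-type resolvent formula and the scaling identity, invoke \eqref{eq:free_limit} for the free part, compute the asymptotics $k_c(z+mc^2)\to\mu(z)$, $\zeta_c(z+mc^2)\sim 2mc/\mu(z)$ to show $\det M_c\to\tfrac14(4-\det(A)+2i\alpha/\mu(z)+2i\mu(z)\delta)\neq 0$, carry out the entry-by-entry cancellation in the correction term (which the paper summarizes as a ``tedious but straightforward calculation''), and finish with dominated convergence for $f_z^c\to f_z$ in $L^2(\R)$. The only cosmetic difference is that you phrase the final convergence as convergence of a finite sum of rank-one operators, whereas the paper phrases it as $L^2(\R^2)$-convergence of the Hilbert--Schmidt kernels; these are equivalent here.
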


\begin{proof}
Using the fact that $(D^{m,c}_{A_c}-mc^2-z)^{-1}=R^{A_c,c}_{z+mc^2}$ together with the resolvent formulae \eqref{eq:resolvent} and \eqref{eq:non_rel_res}, and the convergence result  for the free operators \eqref{eq:free_limit}, we conclude that it is enough to show the uniform convergence of the integral operator with the kernel
\begin{equation} \label{eq:res_diff_rel}
cR_{z+mc^2}^c(x,0)\Big(\sigma_0+\frac{i}{2}A_cZ_c(z+mc^2)\Big)^{-1}A_c R_{z+mc^2}^c(0,y)
\end{equation}
to the integral operator with the kernel
\begin{equation} \label{eq:res_diff_nonrel}
\begin{pmatrix}1 & 0\\ 0 & 0\end{pmatrix}\otimes\Big(\frac{1}{i\mu(z)}f_z(x),g_z(x)\Big)K_A(z)
\begin{pmatrix}
\frac{1}{i\mu(z)}f_z(y)\\
g_z(y)
\end{pmatrix}
\end{equation}
as $c\to+\infty$. Since both operators are Hilbert-Schmidt, it is sufficient to prove the convergence of the matrix elements of the kernels in $L^2(\R^2,\dd x\dd y)$.

We have
\begin{align}
& k_c(z+mc^2)=\sqrt{\left(\frac{z}{c}\right)^2+2mz}=\mu(z)+\mathcal{O}
(c^{-2}), \label{eq:k_asy}\\
& \zeta_c(z+mc^2)=\frac{z+2mc^2}{ck_c(z+mc^2)}=\frac{2mc}{\mu(z)}+\mathcal{O}(c^{-1}), \nonumber\\
& \zeta_c(z+mc^2)^{-1}=\frac{\mu(z)}{2mc}+\mathcal{O}(c^{-3}) \nonumber,
\end{align}
as $c\to+\infty$, and $\det(A_c)=\det(A)$ for all $c\neq 0$. Hence, we get
$$\det\Big(\sigma_0+\frac{i}{2}A_cZ_c(z+mc^2)\Big)=\frac{1}{4}\Big(4-\det(A)+\frac{2i}{\mu(z)}\alpha+2i\mu(z)\delta\Big)+\mathcal{O}(c^{-2}).$$
Due to \eqref{eq:det_ass}, the right-hand side is non-zero for every sufficiently large $c$. Moreover, since $D_0^{m,c}=c D_0$, $\res(D_0^{m,c}-mc^2)=\C\setminus((-\infty,-2mc^2]\cup[0,+\infty))$. Using an obvious variation of Theorem \ref{theo:res}, we infer that for every $c$ above a $z$-dependent threshold, $z\in\res(D_{A_c}^{m,c}-mc^2)$. Next, after a tedious but straightforward calculation, we deduce that
\begin{multline*}
cR_{z+mc^2}^c(x,0)\Big(\sigma_0+\frac{i}{2}A_cZ_c(z+mc^2)\Big)^{-1}A_c R_{z+mc^2}^c(0,y)\\
=\begin{pmatrix}1+\mathcal{O}(c^{-1}) & \mathcal{O}(c^{-1})\\ \mathcal{O}(c^{-1})  & \mathcal{O}(c^{-1})\end{pmatrix}\otimes\Big(\frac{1}{i\mu(z)}f_z^c(x),\sgn(x)f_z^c(x)\Big)K_A(z)
\begin{pmatrix}
\frac{1}{i\mu(z)}f_z^c(y)\\
\sgn(y)f_z^c(y)
\end{pmatrix}
\end{multline*}
as $c\to+\infty$, where
$$f_z^c(x):=\ee^{ik_c(z+mc^2)|x|}.$$

Taking \eqref{eq:k_asy} into account, we see that the point-wise limit of \eqref{eq:res_diff_rel} is \eqref{eq:res_diff_nonrel}.
To establish the $L^2$-convergence, it is sufficient to show that
\begin{equation} \label{eq:f_conv}
\lim_{c\to+\infty}\|f_z^c-f_z\|_{L^2(\R)}=0.
\end{equation}
Recall that with our convention, $\Im\sqrt{w}>0$ for all $w\in\C\setminus[0,+\infty)$. By continuity of the complex square root, we see that for every $c$ above another $z$-dependent threshold, 
$$0<\frac{1}{2}\Im\mu(z)<\Im\sqrt{k_c(z+mc^2)}.$$
Therefore, we get $|f_z^c(x)|\leq\ee^{-\Im\mu(z)|x|/2}$ for all $x\in\R$. Using the dominated convergence theorem, we arrive at \eqref{eq:f_conv}.
\end{proof}

\section{Appendix}
In this appendix we review several basic facts about the operator $H_A$ defined by \eqref{eq:H_A}. Following \cite{AlNi_07}, we see that
$$H_{\min}\subset H_A\subset H_{\max}=(H_{\min})^*,$$
where 
\begin{align*}
&\dom(H_{\min})=\{\psi\in H^2(\R)|\, \psi(0)=\psi'(0)=0\},\quad H_{\min}\psi=-\frac{1}{2m}\psi''\\
&\dom(H_{\max})=H^2(\R_-)\oplus H^2(\R_+), \quad H_{\max}(\psi_-\oplus\psi_+)=-\frac{1}{2m}\psi_-''\oplus -\frac{1}{2m}\psi_+'',
\end{align*}
and that for all $\psi,\phi\in\dom(H_{\max})$,
\begin{equation*}
\langle\psi, H_{\max}\phi\rangle-\langle H_{\max}\psi,\phi\rangle=\frac{1}{2m}\big(\langle\tilde\Gamma_2\psi,
\tilde\Gamma_1\phi\rangle_{\C^2}-\langle\tilde\Gamma_1\psi,
\tilde\Gamma_2\phi\rangle_{\C^2}\big).
\end{equation*}
The boundary value mappings $\tilde\Gamma_j,\, j=1,2,$ were introduced in \eqref{eq:BT}. Since $\psi\mapsto(\tilde\Gamma_1,\tilde\Gamma_2)$ is a surjective map from $\dom(H_{\max})$ onto $\C^2\times\C^2$, the triplet 
$$(\C^2,\frac{1}{\sqrt{2m}}\tilde\Gamma_1,-\frac{1}{\sqrt{2m}}\tilde\Gamma_2)=:(\C^2,\hat\Gamma_1,\hat\Gamma_2)$$
is a boundary triplet for $H_{\max}$. We have
\begin{align}
\dom(H_A)&=\{\psi\in\dom(H_{\max})|\, \hat\Gamma_1\psi=-VAV^*\hat\Gamma_2\psi\} \label{eq:domH_A}\\
&=\{\psi\in\dom(H_{\max})|\, (\hat\Gamma_1\psi,\hat\Gamma_2\psi)\in\Theta_A\}, \nonumber
\end{align}
where  $V$ is given by \eqref{eq:Vdef} and 
$$\Theta_A:=\{(-VAV^*\xi,\xi)|\, \xi\in\C^2\}\subset \C^2\times \C^2$$ 
is a closed linear relation in $\C^2$. Note that
\begin{equation*}
\dom(H_0)=\{\psi\in\dom(H_{\max})|\,\hat\Gamma_1\psi=0\}=H^2(\R),
\end{equation*}
i.e., $H_0$ is non-relativistic free Hamiltonian, as expected. It is a well known fact that the spectrum of $H_0$ is purely absolutely continuous and equal to $[0,+\infty)$.

Using the boundary triplets techniques, we will prove the following statements.
\begin{proposition} \label{prop:app1}
The operator $H_A$ is closed  and it is self-adjoint if and only if $A$ is hermitian.
\end{proposition}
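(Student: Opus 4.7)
The plan is to use the boundary triplet $(\C^2,\hat\Gamma_1,\hat\Gamma_2)$ just constructed, together with the standard correspondence from the theory of boundary triplets (see \cite{Schmudgen}): the extension of $H_{\min}$ parametrized by a linear relation $\Theta\subset\C^2\times\C^2$ via $\dom=\{\psi\in\dom(H_{\max})\,|\,(\hat\Gamma_1\psi,\hat\Gamma_2\psi)\in\Theta\}$ is closed (resp.\ self-adjoint) if and only if $\Theta$ is a closed (resp.\ self-adjoint) linear relation on $\C^2$. In view of \eqref{eq:domH_A}, our relation is $\Theta_A=\{(-VAV^*\xi,\xi)\,|\,\xi\in\C^2\}$, so both assertions are reduced to linear algebra on $\C^2$.

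Closedness of $H_A$ is then immediate: up to a swap of the two factors of $\C^2\times\C^2$, $\Theta_A$ is the graph of the bounded operator $-VAV^*$ on $\C^2$, and hence is a closed subspace of $\C^2\times\C^2$.

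For the self-adjointness statement, the main step is to compute the adjoint relation $\Theta_A^*$ directly from the definition: $(u,v)\in\Theta_A^*$ iff $\langle v,-VAV^*\xi\rangle_{\C^2}=\langle u,\xi\rangle_{\C^2}$ for every $\xi\in\C^2$, which yields $u=-VA^*V^*v$, and thus $\Theta_A^*=\Theta_{A^*}$. By the cited correspondence this gives $(H_A)^*=H_{A^*}$, so $H_A$ is self-adjoint if and only if $H_A=H_{A^*}$. To finish I would invoke injectivity of the map $A\mapsto H_A$, which is the exact analogue of the uniqueness part of Proposition~\ref{prop:basic}: given an arbitrary $\xi\in\C^2$, one constructs $\psi\in\dom(H_{\max})$ realizing any prescribed value of $\hat\Gamma_2\psi$ (and hence of $\tilde\Gamma_2\psi$), so the transmission condition \eqref{eq:TC_nonrel} forces $VAV^*=VBV^*$ whenever $H_A=H_B$, i.e.\ $A=B$. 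Applied with $B=A^*$ this gives $A=A^*$.

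The only delicate point is pinning down the correct convention for the adjoint of a linear relation so that the computation of $\Theta_A^*$ produces $A^*$ rather than some conjugated or transposed variant; once this is fixed, the whole argument rests on the trivial algebraic identity $(VAV^*)^*=VA^*V^*$ and on the injectivity of $A\mapsto\Theta_A$, both of which are completely elementary.
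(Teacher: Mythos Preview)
Your argument is correct but follows a different route from the paper's. For closedness both you and the paper simply invoke the boundary-triplet correspondence (the paper cites \cite[Lem.~14.6(ii)]{Schmudgen}), so there is no real difference there. For the self-adjointness part, however, the paper applies the ready-made criterion of \cite[Thm.~1.12 and Cor.~1.6]{BrGePa_08}: writing the boundary condition as $C\hat\Gamma_1\psi=D\hat\Gamma_2\psi$ with $C=\sigma_0$ and $D=-VAV^*$, self-adjointness is equivalent to $CD^*=-VA^*V^*$ being hermitian together with $\det(CC^*+DD^*)\neq 0$; the first forces $A=A^*$, and the second is automatic because $CC^*+DD^*=\sigma_0+VAA^*V^*$ has spectrum in $[1,\infty)$. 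Your approach instead computes the adjoint relation $\Theta_A^*=\Theta_{A^*}$ directly, obtains $(H_A)^*=H_{A^*}$, and then finishes with injectivity of $A\mapsto H_A$. This is exactly the strategy the paper used in the relativistic case (Propositions~\ref{prop:basic} and~\ref{prop:adjoint}, Corollary~\ref{cor:sa}, and the remark after \eqref{eq:TCtriple}), and it has the bonus of yielding the explicit adjoint formula $(H_A)^*=H_{A^*}$ along the way. The paper's route is shorter because it offloads the work to a cited criterion; yours is more self-contained but requires you to supply the injectivity of $A\mapsto H_A$, which is not stated in the appendix (only its relativistic analogue is proved). Your sketch of that step via surjectivity of $\hat\Gamma_2$ is fine.
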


\begin{proof}
The first statement follows immediately from \cite[Lem. 14.6 (ii)]{Schmudgen}

To prove the second statement we may use  \cite[Thm. 1.12 and Cor. 1.6]{BrGePa_08}, which says that $H_{\max}$ restricted to the functions that obey $C\hat\Gamma_1\psi=D\hat\Gamma_2\psi$ is self-adjoint if and only if
\begin{equation} \label{eq:saCond}
CD^*\text{ is hermitian}\quad \wedge \quad \det(CC^*+DD^*)\neq 0.
\end{equation}
In view of \eqref{eq:domH_A}, this means that if $H_A$ is self-adjoint then $CD^*=-V A^*V^*$ is hermitian, which implies that $A$ itself has to be hermitian.
On the other hand,  if $A$ is hermitian then the same holds true for $CD^*$. Moreover, since $CC^*+DD^*=\sigma_0+V AA^* V^*$, the second condition in \eqref{eq:saCond} is equivalent to $-1\notin\sigma(V AA^* V^*)$, which is always true, because $V AA^* V^*$ represents a positive operator. We conclude that $H_A$ is self-adjoint. 
\end{proof}

\begin{proposition}
Let $z\in\res(H_0)=\C\setminus[0,+\infty)$, $m>0$ and $\mu(z)=\sqrt{2m z}$. Then $z\in\res(H_A)$ if and only if
\begin{equation} \label{eq:res_cond}
4-\det(A)+\frac{2i}{\mu(z)}\alpha+2i\mu(z)\delta\neq 0.
\end{equation}
In the positive case, the resolvent of $H_A$ at $z$ is given by \eqref{eq:non_rel_res}.
\end{proposition}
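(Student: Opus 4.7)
The plan is to construct the resolvent directly via a rank-two ansatz in $\ker(H_{\max}-z)$ and simultaneously identify \eqref{eq:res_cond} as the invertibility criterion. First, for $z\in\C\setminus[0,+\infty)$, the $L^2$-solutions of $-u''/(2m)=zu$ on each half-line are one-dimensional multiples of $\ee^{i\mu(z)|x|}$ (recall $\Im\mu(z)>0$), so $\ker(H_{\max}-z)$ is two-dimensional and spanned by the even function $f_z$ and the odd function $g_z$. For arbitrary $\phi\in L^2(\R)$, set
\[
\psi:=(H_0-z)^{-1}\phi+af_z+bg_z,\qquad (a,b)\in\C^2.
\]
Then $\psi\in\dom(H_{\max})$ and $(H_{\max}-z)\psi=\phi$ automatically; the only remaining requirement is that $\psi$ satisfy \eqref{eq:TC_nonrel}.

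The boundary values on $f_z,g_z$ are read off at $x=0_\pm$:
\[
\tilde\Gamma_1 f_z=\begin{pmatrix}2i\mu(z)\\0\end{pmatrix},\
\tilde\Gamma_2 f_z=\begin{pmatrix}1\\0\end{pmatrix},\
\tilde\Gamma_1 g_z=\begin{pmatrix}0\\2\end{pmatrix},\
\tilde\Gamma_2 g_z=\begin{pmatrix}0\\-i\mu(z)\end{pmatrix}.
\]
Since $u:=(H_0-z)^{-1}\phi\in H^2(\R)$, we have $\tilde\Gamma_1 u=0$ and $\tilde\Gamma_2 u=(u(0),-u'(0))^T$. Plugging these into $\tilde\Gamma_1\psi=VAV^*\tilde\Gamma_2\psi$ yields a linear system $M(z)\binom{a}{b}=V A V^*(u(0),-u'(0))^T$. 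With $V=\mathrm{diag}(i,1)$ one computes $VAV^*=\begin{pmatrix}\alpha & i\beta\\-i\gamma & \delta\end{pmatrix}$, so
\[
M(z)=\begin{pmatrix}2i\mu(z)-\alpha & -\mu(z)\beta\\ i\gamma & 2+i\mu(z)\delta\end{pmatrix},
\]
and a short manipulation gives
\[
\det M(z)=i\mu(z)\Big(4-\det(A)+\tfrac{2i}{\mu(z)}\alpha+2i\mu(z)\delta\Big).
\]

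This identifies the criterion: condition \eqref{eq:res_cond} is equivalent to $\det M(z)\neq 0$. If \eqref{eq:res_cond} holds, then $(a,b)$ are uniquely determined by $\phi$ as linear functionals, and the explicit kernel of $(H_0-z)^{-1}$ gives $u(0)=\frac{im}{\mu(z)}\int f_z\phi$ and $u'(0)=m\int g_z\phi$. Substituting, solving by Cramer's rule, and collecting the terms produces the rank-two correction in \eqref{eq:non_rel_res} with $K_A(z)$ as in \eqref{eq:K_A}; injectivity of $H_A-z$ is automatic since the homogeneous system $M(z)(a,b)^T=0$ only has the trivial solution. Conversely, if \eqref{eq:res_cond} fails, then any nonzero $(a,b)\in\ker M(z)$ yields a nontrivial eigenfunction $af_z+bg_z\in\dom(H_A)$ with eigenvalue $z$, so $z\notin\res(H_A)$.

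The main labor is the bookkeeping in the final step: verifying that the explicit inverse of $M(z)$, combined with the conjugation by $V$, reproduces precisely the matrix $K_A(z)$ of \eqref{eq:K_A}. This is routine but algebraically heavy; alternatively one can invoke the Krein--Naimark resolvent formula for the boundary triplet $(\C^2,\hat\Gamma_1,\hat\Gamma_2)$, after computing the gamma field $\gamma(z)v$ as a linear combination of $f_z,g_z$ and the Weyl function $M_W(z)=\tfrac{i}{2}\mathrm{diag}(1/\mu(z),\mu(z))$, which repackages the same linear-algebraic content in a more conceptual form.
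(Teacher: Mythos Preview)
Your proof is correct and the computations check out (boundary values of $f_z,g_z$, the matrix $M(z)$, its determinant, and the identification of $u(0),u'(0)$ via the free resolvent kernel). However, the approach differs from the paper's. The paper works entirely within the boundary-triplet framework: it invokes the Krein--Naimark formula \cite[Cor.~2.6.3]{Snoo} for the triplet $(\C^2,\hat\Gamma_1,\hat\Gamma_2)$, computes the gamma field $\hat\gamma(z)$ and Weyl function $\hat M(z)=\tfrac{i}{2}\mathrm{diag}(\mu(z)^{-1},\mu(z))$ explicitly, and then reads off both the invertibility condition and the resolvent correction from the matrix $\mathcal{A}(\mathcal{B}-\hat M(z)\mathcal{A})^{-1}$. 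Your approach instead mirrors what the paper does for the relativistic operator $D_A$ in Section~\ref{sec:spectrum}: a direct rank-two ansatz in $\ker(H_{\max}-z)$, followed by matching the transmission condition by hand. What you gain is a self-contained argument that does not require citing the abstract Krein--Naimark machinery; what the paper's route gains is that the algebra is packaged more cleanly (the factor $-\tfrac{2}{m}K_A(z)$ drops out of a single $2\times 2$ inversion, whereas in your approach one must also unwind the $V$-conjugation and the scaling between $(a,b)$ and the $(f_z/(i\mu),g_z)$ basis). Your final remark already acknowledges this trade-off.
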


\begin{proof}
Since the relation $\Theta_A$ that describes the domain of $H_A$ has a tight parametric representation, we can use Corollary 2.6.3 of \cite{Snoo}, which says that for $z\in\res(H_0)$,
\begin{equation} \label{eq:Krein_formula}
(H_A-z)^{-1}-(H_0-z)^{-1}=\hat\gamma(z)\mathcal{A}(\mathcal{B}-\hat M(z)\mathcal{A})^{-1}\hat\gamma(\bar z)^*,
\end{equation}
whenever $(\mathcal{B}-\hat M(z)\mathcal{A})^{-1}$ exists as a linear operator on $\C^2$.  Here,
$$\mathcal{A}=-VAV^*,\quad \mathcal{B}=\sigma_0,\quad \hat\gamma(z)=(\hat\Gamma_1|_{\ker(H_{\max}-z)})^{-1},\quad \hat M(z)=\hat\Gamma_2\circ\hat\gamma(z).$$
Choosing $\{f_z,g_z\}$ given in \eqref{eq:basis} for the basis of $\ker(H_{\max}-z)$, it is straightforward to show that
\begin{equation} \label{eq:gammaHat}
\hat\gamma(z)\begin{pmatrix} a\\b\end{pmatrix}=\frac{\sqrt{2m}}{2}\Big(\frac{a}{i\mu(z)}f_z+b g_z\Big)\quad (\forall a,b\in\C)
\end{equation}
and  
\begin{equation}
\hat{M}(z)=\frac{i}{2}\begin{pmatrix}
\frac{1}{\mu(z)} & 0\\
0 & \mu(z)
\end{pmatrix}.
\end{equation}
Now, the matrix $(\mathcal{B}-\hat M(z)\mathcal{A})$ is invertible if and only if \eqref{eq:res_cond} is satisfied and in the positive case,
\begin{equation} \label{eq:K_A_mult}
\mathcal{A}(\mathcal{B}-\hat M(z)\mathcal{A})^{-1}=-\frac{2}{m}K_A(z)
\end{equation}
with $K_A(z)$ given by \eqref{eq:K_A}.
Finally, using the observation that with our choice of the branch of the complex square-root $\mu(\bar z)=-\overline{\mu(z)}$,  we get
\begin{equation} \label{eq:gammHatAdj}
\hat\gamma(\bar z)^*\phi=\frac{\sqrt{2m}}{2}
\begin{pmatrix}
\frac{1}{i\mu(z)}\int_\R f_z(x)\phi(x)\dd x\\
\int_\R g_z(x)\phi(x)\dd x
\end{pmatrix}.
\end{equation}
Plugging \eqref{eq:gammaHat}, \eqref{eq:K_A_mult}, and \eqref{eq:gammHatAdj} into \eqref{eq:Krein_formula}, we arrive at \eqref{eq:non_rel_res}.
\end{proof}

\begin{remark}
Either by a direct calculation or employing \cite[Thm. 2.6.2]{Snoo}, one concludes that the eigenvalues of $H_A$ in $\C\setminus[0,+\infty)$ are exactly the solutions to 
\begin{equation*}
4-\det(A)+\frac{2i}{\mu(z)}\alpha+2i\mu(z)\delta=0,
\end{equation*}
cf. \cite[Lem. 2.3]{GrKu_14}.
\end{remark}

\section*{Acknowledgments}
L. Heriban acknowledges  the support by the EXPRO grant No. 20-17749X of the Czech Science Foundation (GA\v{C}R). M.~Tu\v{s}ek was partially supported by the grant No.~21-07129S of the Czech Science Foundation (GA\v{C}R) and by the project CZ.02.1.01/0.0/0.0/16\_019/0000778 from the European Regional Development Fund.
% This publication is based upon work from COST Action CA 18232 MAT-DYN-NET, supported by COST (European Cooperation in Science and Technology), www.cost.eu. 


\begin{thebibliography}{99}
\bibitem{AG}N.I. Akhiezer, I.M. Glazman, \textit{Theory of Linear Operators in Hilbert Space, Vol. 2}, Pitman Publishing, Boston, 1981.
\bibitem{solvable}S. Albeverio, F. Gesztesy, R. Høegh-Krohn, H. Holden,  \emph{Solvable models in Quantum Mechanics}, second ed.,  AMS, Providence,  2004.
\bibitem{AlNi_07}S. Albeverio, L. Nizhnik, Schrödinger operators with nonlocal point interactions, J. Math. Anal. Appl. \textbf{332},  884--895 (2007).
\bibitem{singular} S. Albeverio, P. Kurasov, \textit{Singular Perturbations of Differential Operators: Solvable Schrödinger-type Operators},  Cambridge University Press, New York, 2000.
%\bibitem{AlVi_00}V. Alonso, S. De Vincenzo, Delta-type Dirac point interactions and their nonrelativistic limits, J. Theor. Phys. \textbf{39} (2000).
\bibitem{Snoo}J. Behrndt, S. Hassi,  H. de Snoo,  \textit{Boundary Value Problems, Weyl   Functions, and Differential Operators}, Monographs in Mathematics, Springer, 2020.
\bibitem{BeDa_94}S. Benvegnu, L. Dabrowski, Relativistic point interaction in one dimension, Lett. Math. Phys. \textbf{30}, 159--167 (1994).
\bibitem{BrGePa_08}J. Brüning, V. Geyler, K. Pankrashkin, Spectra of self-adjoint extensions and applications to solvable Schrödinger operators, Rev. Math. Phys. \textbf{20}, 1–-70 (2008).
\bibitem{CaMaPo_13}R. Carlone, M. Malamud, A. Posilicano, On the spectral theory of Gesztesy-Šeba realizations of 1-D Dirac operators with point interactions on a discrete set, J. Diff. Eq. \textbf{254} (2013).
\bibitem{CaLoMaTu}B. Cassano, V. Lotoreichik, A. Mas, and M. Tušek, General $\delta$-shell interactions for the two-dimensional Dirac operator: self-adjointness and approximation, to appear in Revista Matemática Iberoamericana, \url{arXiv:2102.09988}
%\bibitem{conway}J.B. Conway, \textit{Functions of One Complex Variable I}, 2nd ed., Springer, 1978.
\bibitem{CuTr_16}J.C. Cuenin, C. Tretter, Non-symmetric perturbations of self-adjoint operators, J. Math. Anal. Appl. \textbf{441}, 235--258 (2016).
\bibitem{DeMa_85}V.A. Derkach, M.M. Malamud, Weyl function of Hermitian operator and its connection with characteristic function, Preprint 85-9 (104), Donetsk. Fiz.-Tekhn. Inst. Acad. Nauk Ukrain. SSR, Donetsk (1985), English translation at \url{https://arxiv.org/abs/1503.08956}.
\bibitem{DiExSe_89}J. Dittrich, P. Exner, P. Šeba, Dirac operators with a spherically symmetric $\delta$-shell interaction, J. Math. Phys. \textbf{30}, 2875--2882 (1989).
\bibitem{GeGrTh_84}F. Gesztesy, H. Grosse, B. Thaller,  A rigorous approach to relativistic corrections of bound state energies for spin-1/2 particles,  Annales de l'I.H.P. Physique théorique \textbf{40}, 159--174 (1984).
\bibitem{GeSe_87}F. Gesztesy, P. Šeba, New analytically solvable models of relativistic point interactions, Lett. Math. Phys. \textbf{13}, 345--358 (1987).
\bibitem{Gorbachuk}V. I. Gorbachuk, M. L. Gorbachuk,  \textit{Boundary Value Problems for Operator Differential Equations}, Kluwer, Dordrecht, 1991.
\bibitem{GrKu_14}A. Grod, S. Kuzhel, Schrödinger operators with non-symmetric zero-range potentials, Methods Funct. Anal. Topology \textbf{20} 34--49 (2014).
\bibitem{GrLe_06}M. J. Gruber, M. Leitner, Spontaneous edge currents for the Dirac equation in two space dimensions, Lett. Math. Phys. \textbf{75}, 25--37 (2006).
\bibitem{Hu_97}R.J. Hughes, Relativistic point interactions: approximation by smooth potentials, Reports on Mathematical Physics \textbf{39}, 425--432 (1997).
\bibitem{Hu_99}R.J. Hughes, Finite-rank perturabtions of the Dirac operator, J. Math. Anal. Appl. \textbf{238}. 67--81 (1999).
\bibitem{HuKrSi_15}A. Hussein, D. Krejčiřík, P. Siegl, Non-self-adjoint graphs, Trans. Amer. Math. Soc. \textbf{367}, 2921-2957 (2015). 
\bibitem{kato}T. Kato, \emph{Perturbation Theory for Linear Operators, 2nd Edition}, Springer, Berlin, 1995.
\bibitem{Ku_96}P. Kurasov, Distribution theory for discontinuous test functions and differential operators with generalized coefficients, J. Math. Anal. Appl. \textbf{201}, 297--323 (1996).
\bibitem{MaPi_18}A. Mas, F. Pizzichillo, Klein's paradox and the relativistic $\delta$-shell interaction in $\R^3$, Analysis \& PDE \textbf{11}, 705--744 (2018).
%More concrete results, if shell=sphere:
\bibitem{MaPi_17}A. Mas, F. Pizzichillo, The relativistic spherical $\delta$-shell interaction in $\R^3$: spectrum and approximation, J. Math. Phys. \textbf{58} (2017).
\bibitem{Pa_06}K. Pankrashkin, Resolvents of self-adjoint extensions with mixed boundary conditions, Rep.  Math. Phys. \textbf{58}, 207--221 (2006).
\bibitem{PaRi_14}K. Pankrashkin, S. Richard, One-dimensional Dirac operators with zero-range interactions: Spectral, scattering, and topological results, J. Math. Phys. \textbf{55} (2014).
\bibitem{RS1} M. Reed, B. Simon, \emph{Methods of Modern Mathematical
 Physics I}, revised and enlarged edition, Academic Press, London, 1980.
\bibitem{RS4} M. Reed, B. Simon, \emph{Methods of Modern Mathematical
 Physics IV}, Academic Press, London, 1978.
\bibitem{Se_86gen}P. \v{S}eba, The generalized point interaction in one dimension, Czechoslovak Journal of Physics B \textbf{36}, 667--673 (1986).
\bibitem{Se_89}P. \v{S}eba, Klein's paradox and the relativistic point interaction, Lett. Math. Phys. \textbf{18} (1989).
\bibitem{Schmudgen} K. Schmüdgen, \textit{Unbounded Self-Adjoint Operators on Hilbert Space}, Graduate Texts in Mathematics Vol. 265, Springer, 2012.
\bibitem{Thaller}B. Thaller, \textit{The {D}irac equation}, Texts and Monographs in Physics, Springer-Verlag, Berlin, 1992.
\bibitem{Tu_20}M. Tušek, Approximation of one-dimensional relativistic point interactions by regular potentials revised, Lett. Math. Phys. \textbf{110}, 2585--2601, 2020.
\end{thebibliography}
\end{document}